\documentclass[a4paper,UKenglish,cleveref, autoref, thm-restate,authorcolumns]{lipics-v2019}

\usepackage{comment}
\usepackage{todonotes}
\usepackage{tikz}
\usetikzlibrary{positioning,shapes,chains,backgrounds,arrows,chains,fit,snakes,calc,automata}

\newcommand{\tikzmark}[1]{\tikz[overlay,remember picture] \node (#1) {};}
\newcommand{\DrawBox}[3][]{%
    \tikz[overlay,remember picture]{
    \draw[black,#1]
      ($(#2)+(-0.75em,3.0ex)$) rectangle
      ($(#3)+(0.0em,-0.75ex)$);}
}

\newlength\mylen
\settowidth\mylen{A}
\newcolumntype{C}{>{\hfil$}p{\mylen}<{$\hfil}} %

\bibliographystyle{plainurl}%

\title{Computing the original eBWT faster, simpler, and with less memory} %

\titlerunning{Computing the original eBWT faster, simpler, and with less memory} %

\author{Christina Boucher}{Department of Computer and Information Science and Engineering, University of Florida, Gainesville, FL, United States}{c.boucher@cise.ufl.edu}{https://orcid.org/0000-0001-9509-9725}{}

\author{Davide Cenzato}{Department of Computer Science, University of Verona, Verona, Italy}{davide.cenzato@univr.it}{https://orcid.org/0000-0002-0098-3620}{}

\author{{Zs}uzsanna Lipt{\'a}k}{Department of Computer Science, University of Verona, Verona, Italy}{zsuzsanna.liptak@univr.it}{https://orcid.org/0000-0002-3233-0691}{}

\author{Massimiliano Rossi}{Department of Computer and Information Science and Engineering, University of Florida, Gainesville, FL, United States}{rossi.m@ufl.edu}{https://orcid.org/0000-0002-3012-1394}{}%

\author{Marinella Sciortino}{Department of Computer Science, University of Palermo, Palermo, Italy}{marinella.sciortino@unipa.it}{https://orcid.org/0000-0001-6928-0168}{}

\authorrunning{C.\ Boucher, D.\ Cenzato, Zs.\ Lipt{\'a}k, M.\ Rossi, M.\ Sciortino} %

\Copyright{Christina Boucher, Davide Cenzato, Zsuzsanna Lipt{\'a}k, Massimiliano Rossi, Marinella Sciortino} %

\begin{CCSXML}
<ccs2012>
   <concept>
       <concept_id>10003752.10003809.10010031.10002975</concept_id>
       <concept_desc>Theory of computation~Data compression</concept_desc>
       <concept_significance>500</concept_significance>
       </concept>
   <concept>
       <concept_id>10003752.10003809.10010031.10010032</concept_id>
       <concept_desc>Theory of computation~Pattern matching</concept_desc>
       <concept_significance>500</concept_significance>
       </concept>
   <concept>
       <concept_id>10003752.10003809.10010031</concept_id>
       <concept_desc>Theory of computation~Data structures design and analysis</concept_desc>
       <concept_significance>500</concept_significance>
       </concept>
 </ccs2012>
\end{CCSXML}

\ccsdesc[500]{Theory of computation~Data compression}
\ccsdesc[500]{Theory of computation~Pattern matching}
\ccsdesc[500]{Theory of computation~Data structures design and analysis}

\keywords{extended BWT, prefix-free parsing, SAIS algorithm, omega-order} %

\relatedversion{} %

\nolinenumbers %

\hideLIPIcs  %

\EventEditors{John Q. Open and Joan R. Access}
\EventNoEds{2}
\EventLongTitle{42nd Conference on Very Important Topics (CVIT 2016)}
\EventShortTitle{CVIT 2016}
\EventAcronym{CVIT}
\EventYear{2016}
\EventDate{December 24--27, 2016}
\EventLocation{Little Whinging, United Kingdom}
\EventLogo{}
\SeriesVolume{42}
\ArticleNo{23}

\usepackage[utf8]{inputenc}
\usepackage[group-separator={,}]{siunitx}
\usepackage[useregional]{datetime2}
\usepackage{framed}
\usepackage{algorithm2e}

\usepackage[noend]{algpseudocode}

\newcommand{\BWT}{\ensuremath{\mathrm{BWT}}}

\newcommand{\LCP}{\ensuremath{\mathrm{LCP}}}
\newcommand{\PFP}{\ensuremath{\mathrm{PFP}}}
\newcommand{\SA}{\ensuremath{\mathrm{SA}}}

\newcommand{\CA}{\ensuremath{\mathrm{CA}}}

\newcommand{\GCA}{\ensuremath{\mathrm{GCA}}}

\def\eBWT{\ensuremath{\mathrm{eBWT}}}
\newcommand{\BBWT}{\ensuremath{\mathrm{BBWT}}}
 
\def\LMS{\textit{LMS}}

\newcommand{\lcp}{\texttt{lcp}}

\renewcommand{\root}{\ensuremath{\mathrm{root}}}
\renewcommand{\exp}{\ensuremath{\mathrm{exp}}}
\newcommand{\lex}{\ensuremath{\mathrm{lex}}}

\def\chr19{\texttt{chr19}}
\def\salmonella{\texttt{salmonella}}
\def\sars{\texttt{sars-cov2}}
\def\ours{\texttt{pfpebwt}}
\def\rope{\texttt{ropebwt2}}
\def\gsufsort{\texttt{gsufsort}}
\def\egap{\texttt{egap}}

\newcommand*{\PlusPlus}{%
\kern0.3ex\raisebox{-0ex}{\scalebox{0.8}{\kern-0.4ex+}}%
\kern-0ex\raisebox{0.5ex}{\scalebox{0.8}{\kern-0.4ex+}}}

\newcommand{\conj}{\textrm{conj}}

\begin{document}

\maketitle

\begin{abstract}

Given an input string, the {\em Burrows-Wheeler Transform} (\BWT) can be seen as a reversible permutation of it that allows efficient compression and fast substring queries.  Due to these properties, it has been widely applied in the analysis of genomic sequence data, enabling important tasks such as read alignment. Mantaci et al.\ [TCS2007] extended the notion of the \BWT{} to a collection of strings by defining the {\em extended Burrows-Wheeler Transform} (\eBWT).  This definition requires no modification of the input collection, and has the property that the output is independent of the order of the strings in the collection. However, over the years, the term \eBWT\ has been used more generally to describe any \BWT\ of a collection of strings.  The fundamental property of the original definition (i.e., the independence from the input order) is frequently disregarded.
In this paper, we propose a simple linear-time algorithm for the construction of the original \eBWT, which does not require the preprocessing of Bannai et al. [CPM 2021]. As a byproduct, we obtain the first linear-time algorithm for computing the \BWT\ of a single string that uses neither an end-of-string symbol nor Lyndon rotations. 

We also combine our new \eBWT{} construction with a variation of {\em prefix-free parsing} (PFP) [WABI 2019] to allow for construction of the \eBWT\ on large collections of genomic sequences. We implement this combined algorithm (\ours{}) and evaluate it on a collection of human chromosomes 19 from the 1,000 Genomes Project, on a collection of Salmonella genomes from GenomeTrakr, and on a collection of SARS-CoV2 genomes from EBI's COVID-19 data portal. We demonstrate that \ours{} is the fastest method for all collections, with a maximum speedup of 7.6x on the second best method. The peak memory is at most 2x larger than the second best method. Comparing with methods that are also, as our algorithm, able to report suffix array samples,  we obtain a 57.1x improvement in peak memory. The source code is publicly available at \url{https://github.com/davidecenzato/PFP-eBWT}.

\end{abstract}
\textbf{}

\section{Introduction}

In the last several decades, the number of sequenced human genomes has been growing at unprecedented pace. In 2015 the number of sequenced genomes was doubling every $7$ months~\cite{stephens_big_2015} -- a pace that has not slowed into the current decade.  The plethora of resulting sequencing data has expanded our knowledge of the biomarkers responsible for human disease and phenotypes \cite{Berner,100K,1000genomes}, the evolutionary history between and among species \cite{VGP}, and will eventually help realize the personalization of healthcare \cite{personal}.  However, the amount of data for any individual species is large enough that it poses challenges with respect to storage and analysis.  One of the most well-known and widely-used methods for compressing and indexing data that has been applied in bioinformatics is the Burrows-Wheeler Transform (\BWT), which is a text transformation that compresses the input in a manner that also allows for efficient substring queries.  Not only can it be constructed in linear-time in the length of the input, it is also reversible -- meaning the original input can be constructed from its compressed form. The \BWT{} is formally defined over a single input string; thus, in order to define and construct it for one or more strings, the input strings need to be concatenated or modified in some way.  In  2007 Mantaci et al.~\cite{MantaciRRS07} presented a formal definition of 
the \BWT{} for a multiset of strings, which they call the {\em extended Burrows-Wheeler Transform} (\eBWT).  It is a bijective transformation that sorts the cyclic rotations of the strings of the multiset according to the $\omega$-order relation, an order, defined by considering infinite iterations of each string, which is different from the lexicographic order.

Since its introduction several algorithms have been developed that construct the \BWT\ of collection of strings for various types of biological data including short sequence reads  \cite{DBLP:journals/tcs/BonizzoniVPPR21,DBLP:journals/tcs/BauerCR13,DBLP:journals/bioinformatics/CoxBJR12,egidi2019external,louza2020gsufsort,diaz2021efficient,egidi2019external,Ander2013,GuerriniRosone_Alcob2019,PrezzaPSR19,PrezzaPSR20}, protein sequences \cite{YANG2010742}, metagenomic data \cite{meta} and longer DNA sequences such as long sequence reads and whole chromosomes \cite{rope}.   However, we note that in the development of some of these methods the underlying definition of \eBWT\ was loosened.  For example, \rope~\cite{rope} tackles a similar problem of building what they describe as the FM-index for a multiset of long sequence reads, however, they do not construct the suffix array (\SA) or \SA\ samples, and also, require that the sequences are delimited by separator symbols. Similarly, \gsufsort~\cite{louza2020gsufsort} and \egap~\cite{egidi2019external} construct the \BWT\ for a collection of strings but do not construct the \eBWT\ according to its original definition. \gsufsort~\cite{louza2020gsufsort} requires the collection of strings to be concatenated in a manner that the strings are deliminated by separator symbols that have an augmented relative order among them. \egap~\cite{egidi2019external}, which was developed to construct the \BWT\ and \LCP\ for a collection of strings in external memory, uses the \texttt{gSACA-K} algorithm to construct the suffix array of the concatenated input using an additional $O(\alpha + 1) \log n$ bits, and then constructs the \BWT\ for the collection from the resulting suffix array.  Lastly, we note that there exists a number of methods for construction of the \BWT\ for a collection of short sequence reads, including {\tt ble}~\cite{DBLP:journals/tcs/BonizzoniVPPR21}, {\tt BCR}~\cite{DBLP:journals/tcs/BauerCR13},  {\tt G2BWT}~\cite{diaz2021efficient}, {\tt egsa}~\cite{egsa}; however, these methods make implicit or explicit use of end-of-string symbols appended to strings in the collection. For an example of the effects of these manipulations, see Section~\ref{sec:preliminaries}, and~\cite{CL21} for a more detailed study.

 We present an efficient algorithm for constructing the \eBWT\ that preserves the original definition of Mantaci et al.~\cite{MantaciRRS07}---thus, it does not impose any ordering of the input strings or delimiter symbols.  It is an adaptation of the well-known Suffix Array Induced Sorting (SAIS) algorithm of Nong et al.~\cite{NongZC2011}, which computes the suffix array of a single string $T$ ending with an end-of-string character $\$$. Our adaptation is similar to the algorithm proposed by Bannai et al.~\cite{BannaiKKP21} for computing the \BBWT, which can also be used for computing the \eBWT, after linear-time preprocessing of the input strings.  The key change in our approach is based on the insight that the properties necessary for applying Induced Sorting are valid also for the $\omega$-order between different strings.  As a result, is it not necessary that the input be Lyndon words, or that their relative order be known at the beginning. Furthermore, our algorithmic strategy, when applied to a single string, provides the first linear-time algorithm for computing the \BWT\ of the string that uses neither an end-of-string symbol nor Lyndon rotations. 

We then combine our new \eBWT{} construction with a variation of a preprocessing technique called {\em prefix free parsing} (\PFP). \PFP\ was introduced by Boucher et  al.~\cite{DBLP:journals/almob/BoucherGKLMM19} for building the (run length encoded) \BWT\ of large and highly repetitive input text. Since its original introduction, it has been extended to construct the $r$-index~\cite{recomb19}, been applied as a preprocessing step for building grammars~\cite{BigRePair}, and used as a data structure itself~\cite{boucher2020pfp}.  Briefly, \PFP\ is a one-pass algorithm that divides the input into overlapping variable length phrases with delimiting prefixes and suffixes; which in effect, leads to the construction of what is referred to as the dictionary and parse of the input.  It follows that the \BWT\ can be constructed in the space that is proportional to the size of the dictionary and parse, which is expected to be significantly smaller than linear for repetitive text.

In our approach, prefix-free parsing is applied to obtain a parse that is a multiset of cyclic strings ({\em cyclic prefix-free parse}) on which our \eBWT\ construction is applied. We implement our approach (called \ours), measure the time and memory required to build the \eBWT\ for sets of increasing size of chromosome 19, {\it Salmonella}, and SARS-CoV2 genomes, and compare this to that required by \gsufsort, \rope, and \egap.  We show that $\ours$ is consistently faster and uses less memory than \gsufsort\ and \egap\ on reasonably large input ($\geq$ 4 copies of chromosome 19, $\geq$ 50 {\it Salmonella} genomes, and $\geq$ 25,000 SARS-CoV2 genomes).  Although \rope\ uses less memory than $\ours$ on large input, \ours\ is 7x more efficient in terms of wall clock time, and 2.8x in terms of CPU time. Moreover, \ours\ is capable of reporting \SA\ samples in addition to the \eBWT\ with a negligible increase in time and memory~\cite{recomb19}, whereas \rope\ does not have that ability. If we compare \ours\ only with methods that are able to report \SA\ samples in addition to the \eBWT\ (e.g., \egap\ and \gsufsort),  we obtain a 57.1x improvement in peak memory.

\section{Preliminaries}\label{sec:preliminaries}

A string $T=T[1..n]$ is a sequence of characters $T[1]\cdots T[n]$ drawn from an ordered alphabet $\Sigma$ of size $\sigma$. We denote by $|T|$ the length $n$ of $T$, and by $\varepsilon$ the empty string, the only string of length $0$. Given two integers $1 \leq i ,j \leq n$, we denote by $T[i..j]$ the string $T[i]\cdots T[j]$, if $i \leq j$, while $T[i..j]=\varepsilon$ if $i > j$. We refer to $T[i..j]$ as a {\em substring} (or {\em factor}) of $T$, to $T[1..j]$ as the $j$-th {\em prefix} of $T$, and to $T[i..n] = T[i..]$ as the $i$-th {\em suffix} of $T$. A substring $S$ of $T$ is called {\em proper} if $T\neq S$. Given two strings $S$ and $T$, we denote by $\lcp(S,T)$ the length of the {\em longest common prefix} of $S$ and $T$, i.e., $\lcp(S,T) = \max\{i \mid S[1..i] = T[1..i])$. 

Given a string $T=T[1..n]$ and an integer $k$, we denote by $T^k$ the $kn$-length 
string $TT\cdots T$ ($k$-fold concatenation of $T$), and by $T^\omega$ the infinite string $TT\cdots$ obtained by concatenating an infinite number of copies of $T$. 
A string $T$ is called {\em primitive} if $T=S^k$ implies $T=S$ and $k=1$. For any string $T$, there exists a unique primitive word $S$ and a unique integer $k$ such that $T = S^k$. We refer to $S=S[1..\frac nk]$ as $\root(T)$ and to $k$ as $\exp(T)$. Thus, $T = \root(T)^{\exp(T)}.$

We denote by $<_\lex$ the lexicographic order: for two strings $S[1..n]$ and $T[1..m]$, $S <_\lex T$ if $S$ is a proper prefix of $T$, or there exists an index $1 \leq i \leq n,m$ such that $S[1..i-1]=T[1..i-1]$ and $S[i]< T[i]$. Given a string $T[1..n]$, the {\em suffix array}~\cite{mm1993}, denoted by $\SA=\SA_T$, is the permutation of $\{1,\ldots,n\}$ such that $T[\SA[i]..]$ is the $i$-th lexicographically smallest suffix of $T$.

We denote by $\prec_\omega$ the $\omega$-order~\cite{GeRe93,MantaciRRS07}, defined as follows: 
for two strings $S$ and $T$, $S \prec_\omega T$ if $\root(S) = \root(T)$ and $ \exp(S) < \exp(T)$, or 
$S^\omega <_\lex T^\omega$ (this implies $\root(S) \neq \root(T)$). One can verify that the $\omega$-order relation is different from the lexicographic one. For instance,  $CG <_\lex CGA$ but $CGA\prec_\omega CG$. 

The string $S$ is a {\em conjugate} of the string $T$ if $S= T[i..n]T[1..i-1]$, for some $i\in \{1,\ldots, n\}$ (also called the {\em $i$-th  rotation} of $T$). The conjugate $S$ is also denoted $\conj_i(T)$.  It is easy to see that $T$ is primitive if and only if it has $n$ distinct conjugates. A {\em Lyndon word} is a primitive string which is lexicographically smaller than all of its conjugates. 
For a string $T$, the {\em conjugate array}\footnote{Our conjugate array $\CA$ is called {\em circular suffix array} and denoted $\SA_{\circ}$ in~\cite{HonKLST12, BannaiKKP21}, and {\em BW-array} in~\cite{KucherovTV13, EnCombWords}, but in both cases defined for primitive strings only.} $\CA=\CA_T$ of $T$ is the permutation of $\{1,\ldots,n\}$ such that $\CA[i]=j$ if $\conj_j(T)$ is the $i$-th conjugate of $T$ with respect to the lexicographic order, with ties broken according to string order, i.e.\ if $\CA[i] = j$ and $\CA[i'] = j'$ for some $i<i'$, then either $\conj_j(T) <_\lex \conj_{j'}(T)$, or $\conj_j(T) = \conj_{j'}(T)$ and $j<j'$.  Note that if $T$ is a Lyndon word, then $\CA[i]=\SA[i]$ for all $1\leq i\leq n$ \cite{GIA07}. 

Given a string $T$, $U$ a {\em circular} or {\em cyclic substring} of $T$ if it is a factor of $TT$ of length at most $|T|$, or equivalently, if it is the prefix of some conjugate of $T$. For instance, \textit{ATA} is a cyclic substring of \textit{AGCAT}. It is sometimes also convenient to regard a given string $T[1..n]$ itself as {\em circular} (or {\em cyclic}); in this case we set $T[0] = T[n]$ and $T[n+1]=T[1]$. 

\subsection{Burrows-Wheeler-Transform} 
 
The {\em Burrows-Wheeler Transform}~\cite{BW94} of $T$, denoted $\BWT$, is a reversible transformation extensively used in data compression. Given a string $T$, $\BWT(T)$ is a permutation of the letters of $T$ which equals the last column of the matrix of the lexicographically sorted conjugates of $T$. %
The mapping $T \mapsto \BWT(T)$ is reversible, up to rotation. It can be made uniquely reversible by adding to $\BWT(T)$ and index indicating the rank of $T$ in the lexicographic order of all of its conjugates. Given $\BWT(T)$ and an index $i$, the original string $T$ can be computed in linear time~\cite{BW94}. The $\BWT$ itself can be computed from the conjugate array, since for all $i=1,\ldots,n$, $\BWT(T)[i] = T[\CA[i] - 1]$, where $T$ is considered to be cyclic.  

It should be noted that in many applications, it is assumed that an end-of-string-character  (usually denoted $\$$), which is not element of $\Sigma$, is appended to the string; this character is assumed to be smaller than all characters from $\Sigma$. Since $T\$$ has exactly one occurrence of $\$$, $\BWT(T\$)$ is now uniquely reversible, without the need for the additional index $i$, since $T\$$ is the unique conjugate ending in $\$$. Moreover, adding a final $\$$ makes the string primitive, and $\$T$ is a Lyndon word. Therefore, computing the conjugate array becomes equivalent to computing the suffix array, since $\CA_{T\$}[i]=\SA_{T\$}[i]$. Thus, applying one of the linear-time suffix-array computation algorithms~\cite{gonzalo-book} leads to linear-time computation of the $\BWT$. 

When no $\$$-character is appended to the string, the situation is slightly more complex. For primitive strings $T$, first the Lyndon conjugate of $T$ has to be computed (in linear time, \cite{Shiloach81}) and then a linear-time suffix array algorithm can be employed~\cite{GIA07}. For strings $T$ which are not primitive, one can take advantage of the following well-known property of the $\BWT$: let $T = S^k$ and $\BWT(S) = U[1..m]$, then $\BWT(T) = U[1]^kU[2]^k\cdots U[m]^k$~(Prop.~2 in \cite{MantaciRS03}). Thus, it suffices to compute the $\BWT$ of $\root(T)$. The root of $T$ can be found by computing the border array $\bf b$ of $T$: $T$ is a power if and only if ${n}/(n-{\bf b}[n])$ is an integer, which is then also the length of $\root(T)$. The border array can be computed, for example, by the preprocessing phase of  the KMP-algorithm for pattern matching~\cite{KMP77}, in linear time in the length of $T$.

\subsection{Generalized Conjugate Array and Extended Burrows-Wheeler Transform}

Given a multiset of strings ${\cal M} = \{T_1[1..n_1],\ldots,T_m[1..n_m]\}$, the {\em generalized conjugate array} of ${\cal M}$, denoted by $\GCA_{\cal M}$ or just by $\GCA$, contains the list of the conjugates of all strings in ${\cal M}$, sorted according to the $\omega$-order relation. More formally,  $\GCA[i]=(j,d)$ if $\conj_j(T_d)$ is the $i$-th string in the $\preceq_\omega$-sorted list of the conjugates of all strings of ${\cal M}$, with ties broken first w.r.t.\ the index of the string (in case of identical strings), and then w.r.t.\ the index in the string itself. 

The {\em extended Burrows-Wheeler Transform} (\eBWT) is an extension of the \BWT\ to a multiset of strings \cite{MantaciRRS07}. It is a bijective transformation that, given a multiset of strings ${\cal M} = \{T_1,\ldots,T_m\}$, produces a permutation of the characters on the strings in the multiset ${\cal M}$. Formally, $\eBWT({\cal M})$ can be computed by sorting all the conjugates of the strings in the multiset according to the $ \preceq_\omega$-order, and the output is the string obtained by concatenating the last character of each conjugate in the sorted list, together with the set of indices representing the positions of the original strings of $\mathcal{M}$ in the list. Similarly to the \BWT, the \eBWT\ is thus uniquely reversible. The $\eBWT(\mathcal{M})$ can be computed from the generalized conjugate array of $\mathcal{M}$ in linear time, since $\eBWT(\mathcal{M})[i]=T_d[j-1]$ if $\GCA[i]=(j,d)$, where again, the strings in $\mathcal{M}$ are considered to be cyclic. It is easy to see that when $\mathcal{M}$ consists of only one string, i.e.\  $\mathcal{M}=\{T\}$, then $\eBWT(\mathcal{M})=\BWT(T)$.

\begin{example}\label{ex:ex1}
Let ${\cal M} = \{ \textit{GTACAACG}, \textit{CGGCACACACGT}, \textit{C} \}$. Then $\GCA({\cal M})$ is as follows, where we give the pair $(j,d)$ vertically, i.e.\ the first row contains the position in the string, and the second row the index of the string: \\

$
\begin{array}{*{21}{r}}
5 & 3 & 5 & 7 & 6  & 9  & 4  & 4 & 6 & 8 & 1 & 1 & 7 & 10 & 3  & 2 & 8 & 1 & 11 & 2 & 12 \\
1 & 1 & 2 & 2 & 1  & 2  & 1  & 2 & 2 & 2 & 3 & 2 & 1 & 2  & 2  & 2 & 1 & 1 &  2 & 1  & 2 \\ 
\end{array}
$
\medskip 

From the \GCA\ we can compute $\eBWT({\cal M}) = \textit{CTCCACAGAACTAAGCCGCGG}$, with index set $\{11,12,18\}$. Note that e.g.\ the conjugate $\conj_8(T_2)$ comes before $\conj_1(T_3)$, since $\textit{CACGTCGGCACA} \prec_{\omega} C$, because $(CACGTCGGCACA)^{\omega} <_\lex C^{\omega} = CCCC\ldots$ holds. The full list of conjugates is in Appendix~\ref{app:exampleeBWT}. 
\end{example} 

\begin{remark}
Note that if end-of-string symbols are appended to the string of the collection the output of $\eBWT$ could be quite different. For instance, if $\mathcal{M}=\{\textit{GTACAACG}\$_1,$ $\textit{CGGCACACACGT}\$_2, C\$_3\}$, $\eBWT(\mathcal{M})=GTCCTCCAC\$_3AGAAA\$_2ACGCC\$_1GG$. 
\end{remark}

Note that while in the original definition of \eBWT~\cite{MantaciRRS07}, the multiset $\mathcal M$ was assumed to contain only primitive strings, our definition is more general and allows also for non-primitive strings. For example, $\eBWT(\{\textit{ATA, TATA}\}) = {\bf TATTAAA}$, with index set $\{2,6\}$, while $\eBWT(\{\textit{ATA,TA,TA}\}) = {\bf TATTAAA}$, with index set $\{2,6,7\}$. Also the linear-time algorithm for recovering the original multiset can be straightforwardly extended.

The following lemma shows how to construct the generalized conjugate array $\GCA_{\mathcal{M}}$ of a multiset $\mathcal{M}$ of strings (not necessarily primitive), once we know the generalized conjugate array $\GCA_\mathcal{R}$ of the multiset $\mathcal{R}$ of the roots of the strings in $\mathcal{M}$. It follows straightforwardly from the fact that equal conjugates will end up consecutively in the \GCA. 

\begin{lemma}\label{le:ebwt_roots} 
Let $\mathcal{M}=\{T_1,\ldots,T_m\}$ be a multiset of strings and let $\mathcal{R}$ the multiset of the roots of the strings in $\mathcal{M}$, i.e. $\mathcal{R}=\{S_1,\ldots,S_m\}$, where $T_i=(S_i^{r_i})$, with $r_i\geq 1$ for $1\leq i\leq m$. Let $\GCA_{\mathcal R}[1..K] = [(j_1,i_1), (j_2,i_2), \ldots, (j_K,i_K)]$, where $K=\sum_{i=1}^m|S_i|$. The generalized conjugate array is then given by 
\begin{align*}
\GCA_{\mathcal M}[1..N] = [ &(j_1,i_1), (j_1+|S_{i_1}|, i_1), \ldots, (j_1+(r_{i_1}-1)\cdot |S_{i_1}|, i_1), \\
& (j_2,i_2), (j_2+|S_{i_2}|, i_2), \ldots, (j_2 + (r_{i_2}-1)\cdot |S_{i_2}|, i_2), \\
&\ldots \\
& (j_K,i_K), (j_K+|S_{i_K}|, i_K), \ldots, (j_K + (r_{i_K}-1)\cdot |S_{i_K}|, i_K)], 
\end{align*}

\noindent with $N=\sum_{i=1}^m|S_i|\cdot r_i$. 

\end{lemma}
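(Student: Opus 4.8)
The plan is to establish two facts and then combine them: (i) the multiset of all conjugates of $T_i = S_i^{r_i}$ consists of exactly $r_i$ copies of each conjugate of $S_i$, with a precise correspondence between positions; and (ii) equal conjugates occupy consecutive blocks in the $\omega$-sorted list, and the relative $\omega$-order between distinct conjugates is unaffected by whether we look at the roots or at their powers. Fact (ii) is essentially the hint given before the lemma statement (``equal conjugates will end up consecutively''), so the real content is making (i) precise and checking the tie-breaking bookkeeping.

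\textbf{Step 1: Conjugates of a power.} First I would observe that for $T_i = S_i^{r_i}$ with $|S_i| = s_i$, every conjugate $\conj_j(T_i)$ for $j \in \{1,\ldots, r_i s_i\}$ equals $\conj_{((j-1) \bmod s_i)+1}(S_i)^{r_i}$; conversely, the conjugate $\conj_j(S_i)$ of the root, raised to the $r_i$-th power, equals $\conj_{j + \ell s_i}(T_i)$ for each $\ell \in \{0, 1, \ldots, r_i-1\}$. Hence the multiset of conjugates of $T_i$ is exactly $\{\conj_j(S_i)^{r_i} : j \in \{1,\ldots,s_i\}, \text{ each taken } r_i \text{ times}\}$, and the $r_i$ copies corresponding to root-position $j$ carry $T_i$-positions $j, j+s_i, \ldots, j+(r_i-1)s_i$. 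These are listed in increasing order, which matches the order in which they must appear after tie-breaking by position within the string.

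\textbf{Step 2: $\omega$-order is insensitive to taking roots.} Next I would note that $\conj_j(S_i)^{r_i}$ and $\conj_j(S_i)$ have the same root, namely $\conj_j(\root(S_i)) = \conj_j(S_i)$ when $S_i$ is primitive (which it is, being a root), so by the definition of $\prec_\omega$ they are $\omega$-equivalent: $(\conj_j(S_i)^{r_i})^\omega = (\conj_j(S_i))^\omega$. More generally, for any two conjugates $C, C'$ appearing in $\mathcal R$ and any powers $a, b \ge 1$, we have $C \prec_\omega C' \iff C^a \prec_\omega C'^b$ and $C =_\omega C' \iff C^a =_\omega C'^b$, since the $\omega$-order depends only on $C^\omega$ and $C'^\omega$, which are unchanged by taking further powers. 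Therefore the $\omega$-sorted list of all conjugates of all $T_i$ is obtained from the $\omega$-sorted list $\GCA_{\mathcal R}$ by replacing each entry $(j_t, i_t)$ — representing the conjugate $\conj_{j_t}(S_{i_t})$ — with the block of its $r_{i_t}$ preimages in $T_{i_t}$.

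\textbf{Step 3: Tie-breaking within a block.} The remaining point — and the step I expect to require the most care — is verifying that within the block replacing a single entry $(j_t, i_t)$, the prescribed order $(j_t, i_t), (j_t + s_{i_t}, i_t), \ldots, (j_t + (r_{i_t}-1)s_{i_t}, i_t)$ is exactly what the tie-breaking rule of $\GCA_{\mathcal M}$ demands, and that blocks coming from the \emph{same} repeated root-conjugate across \emph{different} strings (or from a root-conjugate that itself repeats inside $\GCA_{\mathcal R}$) are concatenated in the right order. Since $\GCA_{\mathcal R}$ already orders ties first by string index and then by position in the string, and since all $T_{i_t}$-conjugates in one block are literally equal as strings, the $\mathcal M$-tie-breaking (first by string index, then by position) reduces exactly to: keep the blocks in the order inherited from $\GCA_{\mathcal R}$, and inside each block sort by $T$-position, i.e.\ increasingly — which is the claimed list. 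Finally I would check the length: $N = \sum_t r_{i_t} \cdot 1$ summed appropriately equals $\sum_{i=1}^m |S_i| r_i$, using $K = \sum_i |S_i|$ and the fact that each root-conjugate $\conj_j(S_i)$ contributes one entry to $\GCA_{\mathcal R}$ and $r_i$ entries to $\GCA_{\mathcal M}$. This completes the argument.
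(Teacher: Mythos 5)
Your proof is correct and fills in, with the right level of care, exactly the argument the paper leaves implicit --- its entire justification is the single remark that equal conjugates end up consecutively in the \GCA, which is your Steps 2--3, while your Step 1 is the standard correspondence between conjugates of a power and conjugates of its root. The only caveat (shared with the paper) is that your claim that $C =_\omega C'$ implies $C^a =_\omega C'^b$ quietly ignores the exponent clause of $\prec_\omega$, under which $C^a$ with $a<b$ strictly precedes $C'^b$ even when $C=C'$; this matters only if two strings of $\mathcal{M}$ have conjugate roots but different exponents, an edge case that the lemma's own tie-breaking-by-string-index formula likewise glosses over.
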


From now on we will assume that the multiset $\mathcal{M}=\{T_1,\ldots,T_m\}$ consists of $m$ primitive strings.

\section{A simpler algorithm for computing the \eBWT \ and \GCA}\label{sec:algo_ebwt}

In this section, we describe our algorithm to compute the \eBWT\ of a multiset of strings ${\cal M}$. We will assume that all strings in ${\cal M}$ are primitive, since we can use Lemma~\ref{le:ebwt_roots} to compute the \eBWT\ of ${\cal M}$ otherwise. Our algorithm is an adaptation of the well-known SAIS algorithm of Nong et al.~\cite{NongZC2011}, which computes the suffix array of a single string $T$ ending with an end-of-string character $\$$. Our adaptation is similar to that of Bannai et al.~\cite{BannaiKKP21} for computing the \BBWT, which can also be used for computing the \eBWT. Even though our algorithm does not improve the latter asymptotically (both are linear time), it is significantly simpler, since it does not require first computing and sorting the Lyndon rotations of the input strings. 

In the following, we assume some familiarity with the SAIS algorithm, focusing on the differences between our algorithm and the original SAIS. Detailed explanations of SAIS can be found in the original paper~\cite{NongZC2011}, or in the books~\cite{ohlebusch-book, louza-book}. 

The main differences between our algorithm and the original SAIS algorithm are: (1) we are comparing conjugates rather than suffixes, (2) we have a multiset of strings rather than just one string, (3) the comparison is done w.r.t.\ the omega-order rather than the lexicographic order, and (4) the strings are not terminated by an end-of-string symbol. 

\medskip

We need the following definition, which is the cyclic version of the definition in~\cite{NongZC2011} (where $S$ stands for smaller, $L$ for larger, and \LMS\ for leftmost-S): 

\begin{definition}[Cyclic types, \LMS-substrings]
Let $T$ be a primitive string of length at least $2$, and $1\leq i \leq |T|$. Position $i$ of $T$ is called {\em (cyclic) S-type} if $\conj_i(T) <_{\lex} \conj_{i+1}(T)$, and {\em (cyclic) L-type} if $\conj_i(T) >_{\lex} \conj_{i+1}(T)$. An S-type position $i$ is called {\em (cyclic) LMS} if  $i-1$ is L-type (where we view $T$ as a cyclic string). An {\em LMS-substring} is a cyclic substring $T[i,j]$ of $T$ such that both $i$ and $j$ are \LMS-positions, but there is no \LMS-position between $i$ and $j$. Given a conjugate $\conj_i(T)$, its {\em LMS-prefix} is the cyclic substring from $i$ to the first \LMS-position strictly greater than $i$ (viewed cyclically). 
\end{definition} 

Since $T$ is primitive, no two conjugates are equal, and in particular, no two adjacent conjugates are equal. Therefore, the type of every position of $T$ is defined.

\begin{example} Continuing Example~\ref{ex:ex1}, \\

$
\begin{array}{*{22}{r}}
G & T & A & C & A & A & C & G & &  & C & G & G & C & A & C & A & C & A & C & G & T \\
S & L & S & L & S  & S  & S  & S & &   & S & L & L & L & S & L  & S & L & S & S & S & L \\
&& * && * &&&&&& * &&&& * && *  && * &&& \\ 
\end{array}
$
\medskip 

\noindent where we mark \LMS-positions with a $*$. The \LMS-substrings are \textit{ACA}, \textit{AACGGTA}, \textit{CGGCA}, and \textit{ACGTC}. The  \LMS-prefix of the conjugate $\conj_7(T_1) = \textit{CGGTACAA}$ is \textit{CGGTA}. 
\end{example} 

\begin{lemma}[Cyclic type properties]~\label{lemma:types} 
Let $T$ be primitive string of length at least $2$. Let $a_1$ be the smallest and $a_{\sigma}$ the largest character of the alphabet. Then the following hold, where $T$ is viewed cyclically: 

\begin{enumerate} 
\item if $T[i] < T[i+1]$, then $i$ is of type $S$, and if $T[i] > T[i+1]$, then $i$ is of type $L$, 
\item if $T[i] = T[i+1]$, then the type of $i$ is the same as the type of $i+1$, 
\item $i$ is of type $S$ iff $T[i'] > T[i]$, where $i' = \min\{ j \mid T[j] \neq T[i]\}$, %
\item if $T[i] = a_1$, then $i$ is of type $S$, and if $T[i] = a_{\sigma}$, then $i$ is of type $L$. 
\end{enumerate} 

\end{lemma}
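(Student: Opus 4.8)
The plan is to prove the four items essentially in the order stated, deriving each from the defining comparison $\conj_i(T) <_\lex \conj_{i+1}(T)$ versus $\conj_i(T) >_\lex \conj_{i+1}(T)$, using throughout that $T$ is primitive so that these are the only two possibilities (no two conjugates coincide, hence the comparison is strict and every position has a well-defined type). Since $T$ is viewed cyclically, $\conj_i(T)$ and $\conj_{i+1}(T)$ are the infinite-free finite rotations, but it will be cleanest to compare them as the infinite words $\conj_i(T)^\omega$ and $\conj_{i+1}(T)^\omega$; primitivity guarantees $\conj_i(T)^\omega \ne \conj_{i+1}(T)^\omega$, and a lexicographic comparison of two distinct periodic words is decided at a finite position.

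First I would prove item (1). If $T[i] < T[i+1]$, then the two conjugates $\conj_i(T) = T[i]T[i+1]\cdots$ and $\conj_{i+1}(T) = T[i+1]T[i+2]\cdots$ differ already in the first character, with $\conj_i(T)[1] = T[i] < T[i+1] = \conj_{i+1}(T)[1]$, so $\conj_i(T) <_\lex \conj_{i+1}(T)$ and $i$ is $S$-type; the case $T[i] > T[i+1]$ is symmetric. Next, item (2): suppose $T[i] = T[i+1] = c$. Then $\conj_i(T)$ and $\conj_{i+1}(T)$ agree in their first character, and stripping it off, the comparison of $\conj_i(T)$ with $\conj_{i+1}(T)$ reduces to the comparison of $\conj_{i+1}(T)$ with $\conj_{i+2}(T)$ (indices cyclic); more precisely $\conj_i(T) = c\cdot \conj_{i+1}(T)$ and $\conj_{i+1}(T) = c\cdot \conj_{i+2}(T)$ as cyclic strings of equal length, so $\conj_i(T) <_\lex \conj_{i+1}(T)$ iff $\conj_{i+1}(T) <_\lex \conj_{i+2}(T)$, i.e.\ $i$ and $i+1$ have the same type. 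Item (4) then follows immediately from item (1): if $T[i] = a_1$ then $T[i] \le T[i+1]$, and since $T$ is primitive we cannot have equality forever around the cycle, so we never get stuck — but more simply, if $T[i] = a_1 < T[i+1]$ we are done by (1), and if $T[i] = a_1 = T[i+1]$ we are done by (2) together with an inductive descent that must eventually hit a strict inequality (again using primitivity). The symmetric argument handles $T[i] = a_\sigma$.

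Item (3) is the one that packages the others and is the main thing to get right. Let $i' = \min\{ j > i \mid T[j] \ne T[i]\}$ in the cyclic sense (this exists and is well-defined because $T$ is primitive of length $\ge 2$, so $T$ is not constant). By repeated application of item (2), the positions $i, i+1, \ldots, i'-1$ all share the same type; and that common type is, by item (1) applied at position $i'-1$ (where $T[i'-1] = T[i] \ne T[i']$), equal to $S$ if $T[i'-1] < T[i']$, i.e.\ $T[i] < T[i']$, and equal to $L$ if $T[i] > T[i']$. Hence $i$ is $S$-type iff $T[i'] > T[i]$, which is the claim. I would take care to phrase the cyclic minimum correctly and to note explicitly that primitivity is exactly what prevents $T$ from being constant and what makes every conjugate comparison strict.

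The main obstacle is not any single deep step but the bookkeeping around cyclicity: one must be careful that "$\min\{j : T[j]\ne T[i]\}$" and the chains of equal positions are read modulo $|T|$, and that the descent arguments (in (2)-driven reductions and in (3)) terminate — which they do precisely because a primitive string is non-constant, so going around the cycle one necessarily encounters a strict character inequality. Once that is set up, items (1), (2), (4) are one-line character comparisons and (3) is a short induction built on (1) and (2).
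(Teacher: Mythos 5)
Your proof is correct and follows essentially the same route as the paper's (very terse) proof: item 1 from a first-character comparison, item 2 from the fact that prepending a common character preserves the order, and items 3 and 4 by iterating item 2 along a run of equal characters until a strict inequality is reached (which exists because a primitive string of length at least $2$ is not constant). Your write-up merely makes explicit the cyclic bookkeeping and termination argument that the paper leaves implicit, so there is nothing to add.
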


\begin{proof} 
1. follows from the fact that for all $b,c\in \Sigma$, if $b<c$ then for all $U,V\in \Sigma^*$, $bU \prec_{\omega} cV$; 2. follows by induction from the fact that for all $U,V\in \Sigma^*$, if $U\prec_{\omega} V$, then $cU \prec_{\omega} cV$;  3. and 4. follow from 2. by induction. \end{proof}

\begin{corollary}[Linear-time cyclic type assignment]
Let $T$ be a primitive string of length at least $2$. Then all positions can be assigned a type in altogether at most $2|T|$ steps. 
\end{corollary}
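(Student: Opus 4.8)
The plan is to mimic the standard linear-time type-assignment scan of SAIS, but adapted to the cyclic setting, relying on the properties in Lemma~\ref{lemma:types}. The key observation is that the cyclic type of a position is, by Lemma~\ref{lemma:types} parts 1 and 2, determined entirely by comparing characters: position $i$ is $S$-type if either $T[i] < T[i+1]$, or $T[i] = T[i+1]$ and $i+1$ is $S$-type (indices mod $|T|$). Since $T$ is primitive of length at least $2$, no two adjacent conjugates are equal, so this recurrence is well-founded once we have a ``seed'' position whose type we can determine outright.

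First I would locate a starting position. By Lemma~\ref{lemma:types} part 4, any occurrence of the smallest character $a_1$ is $S$-type and any occurrence of the largest character $a_\sigma$ is $L$-type; alternatively, and more simply, since $T$ is primitive there must be some position $i_0$ with $T[i_0] \neq T[i_0+1]$ (otherwise $T$ would be a power of a single character, hence not primitive unless $|T|=1$), and the type of $i_0$ is then read off directly from the comparison $T[i_0] \lessgtr T[i_0+1]$ via part 1. Finding such an $i_0$ costs at most $|T|$ character comparisons in one pass around the string.

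Second, starting from $i_0$ and walking \emph{backwards} cyclically ($i_0, i_0-1, \ldots, i_0 - |T| + 1$, all mod $|T|$), I assign the type of each position $i$ from the already-known type of $i+1$ in $O(1)$ time using the recurrence above: compare $T[i]$ with $T[i+1]$; if they differ, the type is forced by part 1; if they are equal, copy the type of $i+1$ by part 2. This completes after exactly $|T|$ steps, since every position is visited once. Together with the at most $|T|$ comparisons to find $i_0$, the total is at most $2|T|$ steps.

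The main thing to be careful about — the only real obstacle — is the correctness of the termination/seeding argument in the cyclic setting: in the original SAIS the sentinel $\$$ guarantees the last position is $S$-type and the backward scan has an unambiguous base case, whereas here we must instead invoke primitivity to rule out the degenerate all-equal case and to guarantee a valid seed exists, and we must verify that walking the full cycle once from that seed does indeed determine every type consistently (i.e.\ that no ``wrap-around'' conflict arises). This is exactly what Lemma~\ref{lemma:types} parts 1 and 2 deliver: the type of $i$ is a function of $T[i]$, $T[i+1]$, and the type of $i+1$, and since some position in the cycle has a character strictly different from its successor, the chain of equal characters leading into it is finite and its type is pinned down — so the assignment is forced and consistent all the way around. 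Everything else is the routine SAIS bookkeeping.
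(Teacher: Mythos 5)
Your proposal is correct and follows essentially the same route as the paper: find a seed position whose type is forced (one whose successor carries a different character, guaranteed to exist by primitivity, found in at most $|T|$ comparisons), then propagate types in one backward cyclic pass using parts 1 and 2 of Lemma~\ref{lemma:types}, for a total of at most $2|T|$ steps. Your extra remarks on wrap-around consistency are a welcome elaboration but do not change the argument.
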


\begin{proof} 
Once the type of one position is known, then the assignment can be done in one cyclic pass over $T$ from right to left, by Lemma~\ref{lemma:types}.  Therefore, it suffices to find the type of one single position. Any position of character $a_1$ or of character $a_{\sigma}$ will do; alternatively, any position $i$ such that $T[i+1] \neq T[i]$, again by Lemma~\ref{lemma:types}. Since $T$ is primitive and has length at least $2$, the latter must exist and can be found in at most one pass over $T$. 
\end{proof}

Let $N$ be the total length of the strings in ${\cal M}$. The algorithm constructs an initially empty array $A$ of size $N$, which, at termination, will contain the \GCA\ of ${\cal M}$. The algorithm also returns the set $\mathcal{I}$ containing the set of indices in $A$ representing the positions of the strings of $\mathcal{M}$. The overall procedure consists of the following steps: 

\begin{quote} Algorithm {\tt SAIS-for-\eBWT}

\begin{itemize}
\item[Step 1] remove strings of length $1$ from ${\cal M}$ (these will be added back at the end) 
\item[Step 2] assign cyclic types to all positions of strings from ${\cal M}$ 
\item[Step 3] use procedure {\tt Induced Sorting} to sort cyclic \LMS-substrings 
\item[Step 4] assign names to cyclic \LMS-substrings; if all distinct, go to Step 6 
\item[Step 5] recurse on new string multiset ${\cal M'}$, returning array $A'$, map $A'$ back to $A$ 
\item[Step 6] use procedure {\tt Induced Sorting} to sort all positions in ${\cal M}$, add length-1 strings in their respective positions, return $(A,\mathcal{I})$
\end{itemize}

\end{quote}

At the heart of the algorithm is the procedure {\tt Induced Sorting} of~\cite{NongZC2011} (Algorithms 3.3 and 3.4), which is used once to sort the \LMS-substrings (Step 3), and once to induce the order of all conjugates from the correct order of the \LMS-positions (Step 6), as in the original SAIS. Before  sketching this procedure, we need to define the order according to which the \LMS-substrings are sorted in Step 2. Note that our definition of \LMS-order is an extension of the \LMS-order defined in~\cite{NongZC2011}, to \LMS-prefixes. It can be proved that these definitions coincide for \LMS-substrings. 

\begin{definition}[\LMS-order]
Given two strings $S$ and $T$, let $U$ resp.\ $V$ be their \LMS-prefixes. We define $U <_{\LMS} V$ if either $V$ is a proper prefix of $U$, or neither is a proper prefix of the other and $U <_{\lex} V$. 
\end{definition}

The procedure {\tt Induced Sorting} for the conjugates of the multiset is analogous to the original one, except that strings are viewed cyclically. First, the array $A$ is subdivided into so-called {\em buckets}, one for each character. For $c\in \Sigma$, let $n_c$ denote the total number of occurrences of the character $c$ in the strings in ${\cal M}$. Then the buckets are  $[1,n_{a_1}],[n_{a_1}+1,n_{a_1}+n_{a_2}], \ldots,[N-n_{a_{\sigma}}+1,N]$, i.e., the $k$-th bucket will contain all conjugates starting with character $a_{k}$. The procedure {\tt Induced Sorting} first inserts all \LMS-positions at the end of their respective buckets, then induces the L-type positions in a left-to-right scan of $A$, and finally, induces the S-type positions in a right-to-left scan of $A$, possibly overwriting previously inserted positions. We need two pointers for each bucket ${\bf b}$, $\textit{head}({\bf b})$ and $\textit{tail}({\bf b})$, pointing to the current first resp.\ last free position of the bucket. 

\begin{quote}Procedure {\tt Induced Sorting}~\cite{NongZC2011} 
\begin{enumerate}
\item insert all \LMS-positions at the end of their respective buckets; initialize \textit{head}({\bf b}), \textit{tail}({\bf b}) to the first resp.\ last  position of the bucket, for all buckets ${\bf b}$
\item induce the L-type positions in a left-to-right scan of $A$: for $i$ from $1$ to $N-1$, if $A[i] = (j,d)$ then $A[\textit{head}(\textit{bucket}(T_d[j-1]))] \gets (j-1,d)$; increment $\textit{head}(\textit{bucket}(T_d[j-1]))$
\item induce the S-type positions in a right-to-left scan of $A$: for $i$ from $N$ to $2$, if $A[i] = (j,d)$ then $A[\textit{tail}(\textit{bucket}(T_d[j-1]))] \gets (j-1,d)$; decrement $\textit{tail}(\textit{bucket}(T_d[j-1]))$
\end{enumerate}
\end{quote}

At the end of this procedure, the \LMS-substrings are listed in correct relative \LMS-order (see Lemma~\ref{lemma:is_correct}), and they can be named according to their rank. For the recursive step, we define, for $i=1,\ldots,m$, a new string $T'_i$, where each \LMS-substring of $T_i$ is replaced by its rank. The algorithm is called recursively on ${\cal M'} = \{T'_1, \ldots, T'_m\}$ (Step 5). 

Finally (Step 6), the array $A'=\GCA({\cal M'})$ from the recursive step is mapped back into the original array, resulting in the placement of the \LMS-substrings in their correct relative order. This is then used to induce the full array $A$. All length-1 strings $T_i$ which were removed in Step 1 can now be inserted between the L- and S-type positions in their bucket (Lemma~\ref{lemma:is4omega}). See Figure~\ref{fig:ex1} for a full example. 

\begin{figure}
\resizebox{.8\textwidth}{!}{%
\begin{tikzpicture}
\setlength\tabcolsep{1.5pt} %
        \node (S1) [shape=rectangle,draw,rounded corners] {
            \begin{tabular}{rccccccccccccccccccccccccccc}
                 & & $T_1$  &   &   &   &   &  &  &  &  & $T_2$  &   &   &   &   &  &  &   &  &  &   &   &   & $T_3$ &  \\
                &  & 1  & 2  & 3  & 4  & 5  & 6  & 7 & 8 &  & 1  & 2  & 3  & 4  & 5  & 6  & 7 & 8 & 9 & 10 & 11  & 12  &   & 1 &  \\
                ${\cal M} =$ & $\{$ & G  & T  & A  & C  & A  & A  & C & G & , & C  & G  & G  & C  & A  & C & A & C  & A & C & G  & T  & ,  & C & $\}$ \\
                 &  & S  & L  & S  & L  & S  & S  & S & S &  & S  & L  & L  & L  & S  & L & S & L  & S & S & S  & L  &   &  &  \\
                 &  &   &   & $\ast$  &   & $\ast$  &   &  &  &  & $\ast$  &   &   &   & $\ast$  &  & $\ast$ &   & $\ast$ &  &   &   &   &  &  \\
            \end{tabular}
        };
        \node[align=center,anchor=south west] (lab2) at (S1.north west) {Step 2 - assign cyclic types to all positions of strings from ${\cal M}$};
        \node[align=center,anchor=south west] (lab1) at (lab2.north west) {Step 1 - remove strings of length 1 from ${\cal M}$};
        \node (S6) [below = of S1.south west, anchor=north west,shape=rectangle,draw,rounded corners] {
            \begin{tabular}{lccccccccccccccccccccccccccc}
                \phantom{${\cal M} =$} &\phantom{$\{$} & A  & \phantom{A}  & \phantom{A}  & \phantom{A}  & \phantom{A}  &\phantom{A}   & \multicolumn{1}{|c}{C} & \phantom{A} & \phantom{A} & \phantom{A} & \phantom{A} & \phantom{A}  & \phantom{A}  &  \phantom{A}  & \multicolumn{1}{|c}{G}  & \phantom{A} & \phantom{A}  & \phantom{A} & \phantom{A} & \multicolumn{1}{|c}{T}  & \phantom{A}  & \phantom{A} & \phantom{A} \\
                S$^\ast$ &  & & 5  & 7  & 9  & 3  & 5  & \multicolumn{1}{|c}{}  &  &  &  &   &   &   & 1  & \multicolumn{1}{|c}{}  &  &    &  &  &  \multicolumn{1}{|c}{} &   &   &  \\
                 &  & & 2  & 2  & 2  & 1  & 1  & \multicolumn{1}{|c}{}  &  &  &  &   &   &   & 2  & \multicolumn{1}{|c}{}  &  &  &   &   & \multicolumn{1}{|c}{}  &   &   &  \\
                 L & &  &   &   &   &   &   & \multicolumn{1}{|c}{4}  & 6 & 8 & 4 &   &   &   &   & \multicolumn{1}{|c}{3}  & 2 &  &   &  &  \multicolumn{1}{|c}{2} & 12  &   &  \\
                 $\longrightarrow$ & &  &   &   &   &   &   & \multicolumn{1}{|c}{2}  & 2 & 2 & 1 &   &   &   &   & \multicolumn{1}{|c}{2}  & 2 &  &   &  & \multicolumn{1}{|c}{1} & 2  &   &  \\
                 S & & 5 & 5 & 7 & 3 & 6 & 9 & \multicolumn{1}{|c}{}  &  &  &  &  & 1 & 7 & 10 & \multicolumn{1}{|c}{}  &  & 8 & 1 & 11 & \multicolumn{1}{|c}{} &   &  &   &  \\
                   $\longleftarrow$& & 1 & 2 & 2 & 1 & 1 & 2 & \multicolumn{1}{|c}{}  &  &  &  &  & 2 & 1 & 2  & \multicolumn{1}{|c}{}  &  & 1 & 1 & 2  & \multicolumn{1}{|c}{} &   &  &   &  \\
                  & & & & & & & & \\
               \phantom{${\cal M} =$} &\phantom{$\{$} & A  & \phantom{A}  & \phantom{A}  & \phantom{A}  & \phantom{A}  &\phantom{A}   & \multicolumn{1}{|c}{C} & \phantom{A} & \phantom{A} & \phantom{A} & \phantom{A} & \phantom{A}  & \phantom{A}  &  \phantom{A}  & \multicolumn{1}{|c}{G}  & \phantom{A} & \phantom{A}  & \phantom{A} & \phantom{A} & \multicolumn{1}{|c}{T}  & \phantom{A}  &  &  \\
               &  & 5 & 5  & 7  & 3  & 6  & 9  & \multicolumn{1}{|c}{4}  & 6 & 8  & 4 &  & 1 & 7 & 10  & \multicolumn{1}{|c}{3}  & 2 & 8 & 1 & 11 &  \multicolumn{1}{|c}{2} & 12  &   &  &  \\
                 &  & 1 & 2 & 2  & 1  & 1  & 2  & \multicolumn{1}{|c}{2}  & 2 & 2 & 1 &   & 2  & 1  & 2  & \multicolumn{1}{|c}{2}  & 2 & 1 &  1 &  2 & \multicolumn{1}{|c}{1}  & 2  &   &  &  \\ 
                 &  & $\ast$ & $\ast$ & $\ast$  & $\ast$  &   & $\ast$  & \multicolumn{1}{|c}{}  &  &  &  &   & $\ast$  &   &   & \multicolumn{1}{|c}{}  &  &  &  &   & \multicolumn{1}{|c}{}  &   &   &  &  \\
              \end{tabular}
        }; 
        \node[align=left,anchor=south west] (lab4) at (S6.north west) {Step 3 - use procedure {\tt Induced Sorting} to sort cyclic\\\LMS-substrings};
        \node (S3) [right = of S6.north east, anchor=north west, shape=rectangle,draw,rounded corners] {
            \begin{tabular}{rcccccccccccccc}
                 A  & A  & C & G & G & T & A  &    &    & a  \\
                A  & C  & A &   &   &   &    &    &    & b  \\
                A  & C  & G & T & C &   &    &    &    & c  \\
                 C  & G  & G & C & A &   &    &    &    & d  \\
            \end{tabular}
        }; 
        \node[align=left,anchor=south west] (lab3) at (S3.north west) {Step 4 - Assign names to\\cyclic $LMS$-substrings};
        \node (S3b) [below = of S3.south west, anchor=north west, shape=rectangle,draw,rounded corners] {
            \begin{tabular}{rcCCCCCC}
                 $T_1'$ & $=$ & b & a &   &   &    &   \\
                $T_2'$  & $=$ & d & b & b  & a  &    &    \\
            \end{tabular}
        }; 

        \node (S4) [below = 0.6cm of S6.south west, anchor=north west,,shape=rectangle,draw,rounded corners, ]{
            \begin{tabular}{rccccccccc}
                 & & $T_1'$  &   &   &  $T_2'$  &   &   &  \\
                 &  & 1  & 2  &  & 1  & 2  & 3  & 4  &  \\
                 ${\cal M}' =$ & $\{$ & b  & a  & , & d  & b  & b  & c  & $\}$ \\
                 &  & L  & S &  & L  & S & S & S  &  \\
                 &  &    & $\ast$  &  & & $\ast$  &   &  &   \\
            \end{tabular}\hspace{1.5cm}
            \begin{tabular}{lcc|ccc|c|ccc}
                \phantom{${\cal M} =$} &\phantom{$\{$} & a  & b & \phantom{A} & \phantom{A} & c  & d  &  \\
                S$^\ast$ &  & 2 &  &  & 2 &   &    &   &  \\
                 &  & 1 &  &  & 2 &   &   &   &  \\
                 L & &   & 1  &    &   &   & 1 &   &  \\
                 $\longrightarrow$  &\phantom{A} & \phantom{A}  & 1  & \phantom{A}   & \phantom{A}  & \phantom{A}  & 2 &   &  \\
                 S & & 2 & \phantom{A}   & 2  & 3 & 4 &   & \phantom{A}  &  \\
                  $\longleftarrow$ & & 1 &    & 2  & 2 & 2 &   &   &  \\
            \end{tabular}\hspace{1.2cm}
            \begin{tabular}{lcCCCCCCCC}
                 $A'$  &  & 2 & 1  & 2  & 3 & 4 & 1 &   &  \\
                   & & 1 & 1  & 2  & 2 & 2 & 2 &   &  \\
            \end{tabular}
        }; 
        \node[align=left,anchor=south west] (lab4) at (S4.north west) {Step 5 - recurse on new string multiset ${\cal M}'$};
        \node (S6b) [below = of S4.south west, anchor=north west,shape=rectangle,draw,rounded corners] {
            \begin{tabular}{lccccccc|cccccccc|ccccc|ccccccc}
                \phantom{${\cal M} =$} &\phantom{$\{$} & A  & \phantom{A}  & \phantom{A}  & \phantom{A}  & \phantom{A}  &\phantom{A}   & C & \phantom{A} & \phantom{A} & \phantom{A} & \phantom{A} & \phantom{A}  & \phantom{A}  &  \phantom{A}  & G  & \phantom{A} & \phantom{A}  & \phantom{A} & \phantom{A} & T  & \phantom{A}  & \phantom{A}  &  \phantom{A} \\
                S$^\ast$ &  & & 5  & 3  & 5  & 7  & 9  &   &  &  &  &   &   &   & 1  &   &  &    &  &  &   &   &   &  \\
                 &  & & 1  & 1  & 2  & 2  & 2  &   &  &  &  &   &   &   & 2  &   &  &  &   &   &   &   &   &  \\
                 L & &  &   &   &   &   &   & 4  & 4 & 6 & 8 &   &   &   &   & 3  & 2 &  &   &  &   2 & 12  &   &  \\
                 $\longrightarrow$ & &  &   &   &   &   &   & 1  & 2 & 2 & 2 &   &   &   &   & 2  & 2 &  &   &  &   1 & 2  &   &  \\
                 S & & 5 & 3 & 5 & 7 & 6 & 9 &   &  &  &  & \tikzmark{ul}1 & 1 & 7 & 10 &   &  & 8 & 1 & 11 &  &   &  &   &  \\
                 $\longleftarrow$ & & 1 & 1 & 2 & 2 & 1 & 2 &   &  &  &  & 3\tikzmark{br} & 2 & 1 & 2  &   &  & 1 & 1 & 2  &  &   &  &   &  \\
              \end{tabular}
        }; 
        \DrawBox[ultra thick, rounded corners,  black]{ul}{br}
        \tikz[overlay,remember picture]{
            \node[align=right,anchor=north east] (lab7) at ($(ul.north east)+(+15.75em,0.25em)$) {$T_3$};
            \draw[->] (lab7) -- ($(br.north east)+(+0.0em,+0.75em)$);
        }
        \node[align=left,anchor=south west] (lab4) at (S6b.north west) {Step 6 - use procedure {\tt Induced Sorting} to sort cyclic \LMS-substrings,\\ add length-1 strings in their respective positions};        
        \node (S7) [below = 0.6cm of S6b.south west , anchor=north west,shape=rectangle,draw,rounded corners]{
            \begin{tabular}{lcCCCCCCCCCCCCCcCCCCcCccc}
                 GCA& \phantom{$\{$} & 5 & 3  & 5  & 7  & 6 & 9  & 4  & 4  & 6 & 8  & {\bf 1} & {\bf 1} & 7 & 10 & 3  & 2  & 8 & {\bf 1}  & 11 &  2 & 12  &   &   \\
                 \phantom{${\cal M} =$}&  & 1 & 1 & 2 & 2 & 1 & 2 & 1 & 2  & 2  & 2  & {\bf 3}  & {\bf 2}  & 1  & 2  & 2  & 2 & 1 &  {\bf 1} &  2 & 1  & 2  &   &  \\ 
                 \eBWT\ &  & C & T & C & C & A & C & A & G  & A & A & C & T & A & A & G & C & C & G & C & G & G &   &  \\ 
            \end{tabular}
        }; 
        \node[align=left,anchor=south west] (lab4) at (S7.north west) {Generalized conjugate array of ${\cal M}$};
\end{tikzpicture}
}
\caption{The algorithm {\tt SAIS-for-\eBWT} on Example~\ref{ex:ex1}. Start positions of input strings are marked in bold.  
\label{fig:ex1}}
\end{figure}

\subsection{Correctness and running time}

The following lemma shows that the individual steps of {\tt Induced Sorting} are applicable for the $\omega$-order on conjugates of a multiset (part 1), that L-type conjugates (of all strings) come before the S-type conjugates within the same bucket (part 2), and that length-1 strings are placed between S-type and L-type conjugates (part 3). The second property was originally proved for the lexicographic order between suffixes in~\cite{KA2005}: 

\begin{lemma}[Induced sorting for multisets]\label{lemma:is4omega}
Let $U,V\in \Sigma^*$. 

\begin{enumerate}
\item If $U \prec_{\omega} V$, then for all $c\in \Sigma$, $cU \prec_{\omega} cV$. 
\item If $U[i] = V[j]$, $i$ is an L-type position, and $j$ an S-type position, then $conj_i(U) \prec_{\omega} conj_j(V)$. 
\item If $U[i] = V[j]=c$, $i$ is an L-type position, and $j$ an S-type position, then $conj_i(U) \prec_{\omega} c \prec_{\omega} conj_j(V)$. 
\end{enumerate}
\end{lemma}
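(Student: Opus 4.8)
The plan is to treat the three parts in order, using the characterization of $\prec_\omega$ in terms of infinite iterations together with Lemma~\ref{lemma:types}, which tells us how L/S types relate to the surrounding characters.

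For part 1, I would simply unfold the definition of $\prec_\omega$. If $U\prec_\omega V$ there are two cases. If $\root(U)=\root(V)$ and $\exp(U)<\exp(V)$, then either $\root(cU)=\root(cV)$ with the same exponent inequality (when prepending $c$ does not change the root structure), or $(cU)^\omega \neq (cV)^\omega$ as infinite strings and the inequality $(cU)^\omega <_\lex (cV)^\omega$ follows because $(cU)^\omega$ and $(cV)^\omega$ agree on the first $1+\lcp(U^\omega,V^\omega)$ characters and then $U^\omega$ decides; more cleanly, one observes $cU \prec_\omega cV \iff (cU)^\omega \le_\lex (cV)^\omega$ and $(cU)^\omega = c\cdot U^\omega$ as infinite strings, so the claim reduces to: if $U^\omega \le_\lex V^\omega$ then $cU^\omega \le_\lex cV^\omega$, which is immediate for the lexicographic order on infinite strings. (This part is essentially a restatement of point 1 in the proof of Lemma~\ref{lemma:types} and can be dispatched in two lines.)

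For part 2, write $c = U[i] = V[j]$. The key point is that an L-type position $i$ behaves, when iterated, like a string that starts $c\cdots$ and is $\le_\lex$-decreasing in a precise sense, whereas an S-type position is $\le_\lex$-increasing. Concretely, using Lemma~\ref{lemma:types}(3): $i$ is L-type means that, letting $i'$ be the first position (cyclically) from $i$ with $U[i']\neq c$, we have $U[i'] < c$; symmetrically, $j$ S-type gives $V[j'] > c$ for the analogous $j'$. Now compare $\conj_i(U)^\omega$ and $\conj_j(V)^\omega$ character by character: both begin with a maximal run of $c$'s; say $U$ contributes a run of length $p\ge 1$ and $V$ a run of length $q\ge 1$ at these positions (these are the runs up to $i'$ resp.\ $j'$, read cyclically). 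If $p<q$, then $\conj_i(U)^\omega$ has a character $<c$ at position $p+1$ while $\conj_j(V)^\omega$ still has $c$ there, so $\conj_i(U)\prec_\omega\conj_j(V)$; if $p>q$ the same reasoning with $V$'s character $>c$ at position $q+1$ again gives $\conj_i(U)\prec_\omega\conj_j(V)$; if $p=q$, then $\conj_i(U)^\omega$ has a character $<c$ at position $p+1$ and $\conj_j(V)^\omega$ has a character $>c$ there, and we are done. In every case the first point of disagreement favors $U$.

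For part 3, observe that $c^\omega = ccc\cdots$, so $c\prec_\omega\conj_j(V)$ reduces to $c^\omega \le_\lex \conj_j(V)^\omega$: the latter begins with a run of $c$'s and then (by S-type and Lemma~\ref{lemma:types}(3)) a character $>c$, so $c^\omega <_\lex \conj_j(V)^\omega$. Dually, $\conj_i(U)\prec_\omega c$ reduces to $\conj_i(U)^\omega <_\lex c^\omega$, which holds because $\conj_i(U)^\omega$ begins with a run of $c$'s and then a character $<c$ (L-type). Note $\conj_i(U)^\omega = c^\omega$ is impossible since $U$ is primitive (hence not a power of a single letter unless $|U|=1$, but a length-1 string has no L-type position). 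Part 2 then also follows as the "outer" inequalities composed through $c$, but I would still prove part 2 directly as above since it does not require $c$ to sit between them.

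\textbf{Main obstacle.} The only delicate point is the cyclic/wrap-around bookkeeping in part 2: when the maximal run of $c$'s starting at position $i$ wraps past the end of $U$, one must be careful that "the first position with a different character" is well-defined, which is exactly where primitivity of $U$ (and $|U|\ge 2$) is used — a non-primitive or single-letter string could be all $c$'s. Once primitivity is invoked so that both runs are finite and genuinely terminate in a character $\neq c$, the three-way case analysis ($p<q$, $p>q$, $p=q$) is routine. I would state this reliance on primitivity explicitly at the start of the proof.
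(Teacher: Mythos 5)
Your parts 2 and 3 are correct and take essentially the paper's route: the paper proves part 3 directly by the same argument (the $\omega$-power of an L-type conjugate starts with a run of $c$'s followed by a character $<c$, that of an S-type conjugate by a run of $c$'s followed by a character $>c$, so both compare correctly against $c^\omega$), and then derives part 2 from part 3 by transitivity, which you also note. Your additional direct three-way case analysis on the run lengths $p$ and $q$ for part 2 is a sound, slightly longer alternative, and your explicit attention to primitivity and the cyclic wrap-around makes precise what the paper leaves implicit.

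Part 1, however, contains a genuine error: the identity $(cU)^\omega = c\cdot U^\omega$ is false, since $(cU)^\omega = cUcUcU\cdots$ while $c\cdot U^\omega = cUUU\cdots$. In fact, part 1 as literally stated fails for general $U,V\in\Sigma^*$: take $U=ab$ and $V=abab$, so $U\prec_\omega V$ (same root, smaller exponent), yet $(cabab)^\omega <_\lex (cab)^\omega$ already at the fourth character, i.e.\ $cV\prec_\omega cU$. What the induced-sorting step actually needs is the statement for conjugates, where the identity \emph{does} hold: $\conj_{i-1}(T)^\omega = T[i-1]\cdot \conj_i(T)^\omega$, because the prepended character is not new but comes from the same cyclic string. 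For that version, the monotonicity fact you reduce to --- prepending the same character to two infinite strings preserves $\le_\lex$ --- is exactly right. So your part 1 is repaired by replacing the false identity with the conjugate identity (or by restricting the claim to the form in which it is applied); as written, the two-line dispatch does not go through. The paper's own ``follows directly from the definition'' glosses over the same subtlety, but it does not assert the false equation.
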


\begin{proof}
{\em 1.} follows directly from the definition of $\omega$-order. {\em 3.} implies {\em 2.} For {\em 3.}, let $i'$ be the nearest character following $i$ in $U$ such that $U[i'] \neq c$. By Lemma~\ref{lemma:types}, $U[i']<c$, and thus $\conj_i(U) <_\lex c^{|U|}$, and therefore, $conj_i(U) \prec_{\omega} c$. Analogously, if $j'$ is the next character in $V$ s.t.\ $V[j'] \neq c$, then by Lemma~\ref{lemma:types}, $V[j']>c$, and therefore, $c \prec_{\omega} conj_j(V)$.  
\end{proof}

Next, we show that after applying procedure {\tt Induced Sorting}, the conjugates will appear in $A$ such that they are correctly sorted w.r.t.\ to the \LMS-order of their \LMS-prefixes, while the order in which conjugates with identical \LMS-prefixes appear in $A$ is determined by the input order of the \LMS-positions. 

\begin{lemma}[Extension of Thm.~3.12 of~\cite{NongZC2011}]\label{lemma:is_correct}
Let $T_1,T_2\in {\cal M}$, let $U$ be the \LMS-prefix of $\conj_i(T_1)$, with $i'$ the last position of $U$; let $V$ be the \LMS-prefix of $\conj_j(T_2)$, and $j'$ the last position of $V$. Let $k_1$ be the position of $\conj_i(T_1)$ in array $A$ after the procedure {\tt Induced Sorting}, and $k_2$ that of $\conj_j(T_2)$. 

\begin{enumerate}
    \item If $U<_{\LMS}V$, then $k_1<k_2$. 
    \item If $U=V$, then $k_1<k_2$ if and only if $\conj_{i'}(T_1)$ was placed before $\conj_{j'}(T_2)$ at the start of the procedure. 
\end{enumerate}
\end{lemma}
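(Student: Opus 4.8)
The plan is to adapt the correctness argument of SAIS (Nong et al., proof of Thm.~3.12), reorganised around \LMS-prefixes instead of \LMS-substrings, and to isolate the four places where the cyclic / multiset / $\omega$-order / sentinel-free setting forces a change. I would split the proof into two stages mirroring the two scans of {\tt Induced Sorting}. Stage~(a): after the left-to-right scan, every L-type conjugate sits at the current \emph{head} of its bucket, and the L-type conjugates appear in $A$ ordered by $<_{\LMS}$ on their \LMS-prefixes, with ties (equal \LMS-prefixes) broken according to the order in which the corresponding \LMS-positions were inserted at initialisation. Stage~(b): after the right-to-left scan the analogous statement holds for the S-type conjugates at the \emph{tails} of the buckets, and in addition, inside every bucket all S-type conjugates lie to the right of all L-type conjugates --- this last point is exactly Lemma~\ref{lemma:is4omega}(2), and since in each bucket the number of L-type plus S-type conjugates equals the bucket size, it also guarantees that the two scans never overwrite each other's output. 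Assembling the stages over all buckets gives both parts of the lemma; the only mixed L/S case in part~1 is compatible with $<_{\LMS}$ because the first character of an \LMS-prefix that differs from the bucket character is larger for an S-type conjugate and smaller for an L-type one, which forces $U \not<_{\LMS} V$ precisely when the S-type conjugate would be placed after the L-type one.

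For Stage~(a) I would induct along the chain of L-inductions. A conjugate $\conj_a(T_d)$ is written into $A$ exactly when the scan reaches the cell currently holding $\conj_{a+1}(T_d)$; since $a$ is L-type, position $a+1$ of $T_d$ is itself either L-type or an \LMS-position (an S-type position preceded by an L-type one), so iterating this predecessor relation yields a \emph{finite} chain that ends at the first \LMS-position after $a$, i.e.\ at the last position of the \LMS-prefix of $\conj_a(T_d)$, which was placed during initialisation. This finiteness is what replaces the anchoring role of the sentinel in ordinary SAIS: there is no globally smallest conjugate to start from, but every L-induction chain bottoms out at an initialised \LMS-position, so the induction is well-founded. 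The inductive claim, for conjugates $\conj_a(T_1)$ and $\conj_b(T_2)$ with \LMS-prefixes $U$ and $V$, is that $\conj_a(T_1)$ is placed before $\conj_b(T_2)$ iff $U<_{\LMS}V$, or $U=V$ and the \LMS-positions terminating their chains were initialised in the corresponding order. The base case --- one chain already at its \LMS-position --- is settled by the bucket when $U[1]\neq V[1]$ (then $U<_{\LMS}V$ forces $U<_{\lex}V$, hence the earlier bucket), and by the initialisation order when $U[1]=V[1]$. The inductive step strips the common first character $c=T_1[a]=T_2[b]$: Lemma~\ref{lemma:is4omega}(1) states that prepending $c$ preserves $\prec_{\omega}$, and the definition of $<_{\LMS}$ is tailored so that prepending $c$ likewise preserves the $<_{\LMS}$-comparison of the remaining \LMS-prefixes, its ``proper prefix'' clause matching precisely the case in which one chain hits its \LMS-position strictly before the other (handled by the base case). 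Stage~(b) is symmetric: scan right-to-left, fill tails, invoke Lemma~\ref{lemma:is4omega}(2) to keep induced S-type conjugates clear of the fixed L-type prefix of a bucket, and note that the \LMS-positions deposited at initialisation are harmlessly overwritten (each is S-type and re-induced in its correct place).

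The remaining two differences are light: the multiset aspect only means carrying the string index $d$ through the argument, and since Lemma~\ref{lemma:is4omega} is stated for arbitrary $U,V\in\Sigma^{*}$ all comparisons --- including between conjugates of different strings --- are covered unchanged; the passage from lexicographic to $\omega$-order is absorbed into Lemma~\ref{lemma:is4omega}(1) together with Lemma~\ref{lemma:types}, which shows that the S/L-type of a position depends only on a bounded local window, so that ``the L-induction chain of $\conj_a(T_d)$'' is well defined and the types are exactly those {\tt Induced Sorting} acts on.

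I expect the main obstacle to be making the well-foundedness and the ``alignment'' in Stage~(a) fully rigorous in the cyclic setting: one must check that the L-induction chains are genuinely finite, that the \LMS-prefix of each conjugate along a chain is the expected suffix of the \LMS-prefix of its predecessor (so the inductive hypothesis really controls the quantity one wants), and --- for part~2 --- that an inversion in the arbitrary initialisation order of two equal \LMS-prefixes is neither corrected nor amplified by later inductions but survives verbatim to the final positions. The bucket-pointer bookkeeping (heads and tails never cross; the left-to-right scan writes only into cells strictly after the current one, the right-to-left scan symmetrically) must be re-verified for the sentinel-free bucket layout, but this is routine.
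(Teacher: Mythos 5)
Your proposal is correct and follows essentially the same route as the paper: the paper's proof is a two-line sketch that invokes Lemma~\ref{lemma:is4omega} together with the fact that each \LMS-substring has the run structure $S^+L^+S$, which is exactly the structural fact your finite L-/S-induction chains (bottoming out at the initialised \LMS-positions in place of a sentinel) make explicit. The points you flag as needing care --- well-foundedness of the chains, the proper-prefix clause of $<_{\LMS}$ when one chain terminates earlier, stability of the initialisation order for part~2, and that a same-bucket write never lands behind the scan (which follows from Lemma~\ref{lemma:types}(2): an L-type position cannot share its character with an immediately following S-type position) --- are precisely the right ones and all check out.
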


\begin{proof}
Both claims follow from Lemma~\ref{lemma:is4omega}, and the fact that from one \LMS-position to the previous one, there is exactly one run of L-type positions, preceded by one run of S-type positions. 
\end{proof}

The next lemma shows that the \LMS-order of the \LMS-prefixes respects the $\omega$-order. 

\begin{lemma}\label{lemma:lms_prefix}
Let $S,T\in \Sigma^*$, let $U$ be the \LMS-prefix of $S$ and $V$ the \LMS-prefix of $T$. If $U<_{\LMS}V$ then $S\prec_{\omega} T$.
\end{lemma}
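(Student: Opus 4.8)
The plan is to reduce the statement to a lexicographic comparison of the periodic extensions $S^\omega$ and $T^\omega$, and then to split on which of the two clauses of the definition of $<_{\LMS}$ applies. I would first record two easy reductions. Since $\mathcal{M}$ consists of primitive strings, $S$ and $T$ are primitive, so $S \prec_\omega T$ holds iff $S=T$ or $S^\omega <_\lex T^\omega$; and $U <_{\LMS} V$ forces $U \neq V$ under either clause, while equal conjugates have equal LMS-prefixes, so $S \neq T$. Hence it suffices to show $S^\omega <_\lex T^\omega$. I would also use throughout that an LMS-prefix of a conjugate is a prefix of that conjugate (its length never exceeds that of the conjugate), so $U$ is a prefix of $S^\omega$ and $V$ a prefix of $T^\omega$.

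The first case --- neither of $U,V$ a proper prefix of the other, and $U <_\lex V$ --- is immediate: since $U \neq V$ they cannot agree on their first $\min(|U|,|V|)$ characters (else one would be a proper prefix of the other), so there is $p \le \min(|U|,|V|)$ with $U[1..p-1]=V[1..p-1]$ and $U[p]<V[p]$; as $U$ and $V$ are prefixes of $S^\omega$ and $T^\omega$ of length at least $p$, this gives $S^\omega <_\lex T^\omega$ directly.

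The main case is when $V$ is a proper prefix of $U$. Then $|V|<|U|\le|S|$, so $V$ is also a prefix of $S$, hence $S[1..|V|]=V=T[1..|V|]$; put $p=|V|$ (note $p\ge 2$, since the ``next LMS-position'' in the definition is taken strictly larger than the start) and $c=S[p]=T[p]$. Because $V$ is the LMS-prefix of the conjugate $T$, its last position $p$ is an LMS-position of $T$, so $p$ is S-type and, by Lemma~\ref{lemma:types}, $T[p-1]>T[p]$. On the other hand the LMS-prefix $U$ of the conjugate $S$, read left to right, consists of a (possibly empty) run of S-type positions of $S$, then a nonempty run of L-type positions of $S$, terminated by its final LMS-position --- this is exactly the ``one S-run, then one L-run, between consecutive LMS-positions'' structure used in the proof of Lemma~\ref{lemma:is_correct}. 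Since $p<|U|$, position $p$ of $S$ lies in the S-run part or the L-run part of $U$. It cannot lie in the S-run: there $S[1..p]$ would be weakly increasing (Lemma~\ref{lemma:types}(1)), contradicting $S[p-1]=T[p-1]>T[p]=S[p]$. So position $p$ of $S$ is L-type while position $p$ of $T$ is S-type, both carrying the character $c$; by Lemma~\ref{lemma:is4omega}(3), $\conj_p(S)\prec_\omega c\prec_\omega\conj_p(T)$, whence $\conj_p(S)\prec_\omega\conj_p(T)$ and $\conj_p(S)\neq\conj_p(T)$. As both are primitive this means $(\conj_p(S))^\omega<_\lex(\conj_p(T))^\omega$; and since $S[1..p-1]=T[1..p-1]$ we have $S^\omega=S[1..p-1]\cdot(\conj_p(S))^\omega$ and $T^\omega=T[1..p-1]\cdot(\conj_p(T))^\omega$ with equal prefixes, so $S^\omega<_\lex T^\omega$.

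I expect the genuinely delicate point to be the structural description of LMS-prefixes in the main case, and in particular the degenerate situation in which an LMS-prefix coincides with its whole conjugate --- which happens precisely when the underlying string has a single LMS-position. There ``$V$ ends at an LMS-position of $T$'' has to be re-examined, and the step that places position $p$ in the L-run of $U$ has to be reworked; I would isolate it as a separate sub-case, exploiting that a primitive string with a unique LMS-position is unimodal (weakly increasing to a peak, then weakly decreasing, with a strict drop on the wrap-around) together with Lemma~\ref{lemma:is4omega}. The remaining items --- that LMS-prefixes are prefixes, that $|V|\ge 2$, and the passage from $\conj_p(S)\prec_\omega\conj_p(T)$ back to $S\prec_\omega T$ --- are routine.
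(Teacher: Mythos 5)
Your proof is correct and takes essentially the same route as the paper's: the non-prefix case is settled by the first mismatch, and in the proper-prefix case you identify position $|V|$ as S-type in $T$ but L-type in $S$ and conclude via Lemma~\ref{lemma:is4omega}(3), which encapsulates exactly the character-run argument the paper carries out inline. The only substantive difference is that you justify (via the S-run/L-run structure of an \LMS-prefix) why position $|V|$ of $S$ must be L-type, a step the paper merely asserts; the degenerate wrap-around case you flag needs no separate treatment, since that structure persists cyclically.
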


\begin{proof}
If neither $U$ nor $V$ is a proper prefix one of the other, then there exists an index $i$ s.t.\ $S[i] = U[i] < V[i] = T[i]$, and therefore, $S \prec_{\omega} T$. Otherwise, $V$ is a proper prefix of $U$. Let $i=|V|$ and $c=V[i]$. Since both $U$ and $V$ are \LMS-prefixes, with $i$ being the last position of $V$ but not of $U$, this implies that $V[i] = T[i]$ is of type S, while $U[i]=S[i]$ is of type L. Let $j$ be the next character in $S$ s.t.\ $S[j] \neq c$, and $k$ be the next character in $T$ s.t.\ $T[k]\neq c$. By Lemma~\ref{lemma:types}, $S[j]<c$, $T[k]>c$, and by definition of $j,k$ all characters inbetween equal $c$. Then for $i'=\min(j,k)$, we have $S[i']<T[i']$, with $i'$ being the first position where $S$ and $T$ differ. Therefore, $S\prec_{\omega} T$. 
\end{proof}

\begin{theorem}
Algorithm {\tt SAIS-for-\eBWT} correctly computes the \GCA\ and \eBWT\ of a multiset of strings ${\cal M}$ in time $O(N)$, where $N$ is the total length of the strings in ${\cal M}$. 
\end{theorem}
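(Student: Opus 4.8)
The plan is to establish correctness and the linear-time bound in parallel with the structure of the algorithm, following the same outline as the classical SAIS correctness argument but invoking the cyclic lemmas already proved. I would organize the proof as an induction on the recursion depth, with the base case being the situation in which all cyclic \LMS-substrings are distinct (Step 4 sends us directly to Step 6, no recursion), and the inductive step being the case where the recursive call on ${\cal M}'$ returns $A' = \GCA({\cal M}')$.

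First I would argue correctness of Step 3 (sorting the cyclic \LMS-substrings). By the Corollary on linear-time cyclic type assignment, Step 2 correctly assigns all types in $O(N)$ time. Then Lemma~\ref{lemma:is4omega}(1) gives exactly the monotonicity property that makes the left-to-right L-induction and right-to-left S-induction of procedure {\tt Induced Sorting} valid for the $\omega$-order; Lemma~\ref{lemma:is4omega}(2) guarantees that within each bucket the L-type conjugates precede the S-type ones, which is what lets the two induction passes not interfere destructively (an S-type position is only ever written into a tail slot that no surviving L-type entry occupies). Lemma~\ref{lemma:is_correct} then states precisely that after {\tt Induced Sorting} the conjugates sit in $A$ sorted by the \LMS-order of their \LMS-prefixes, with ties (equal \LMS-prefixes) resolved by the input order of the terminating \LMS-positions. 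Restricting attention to the \LMS-positions themselves, and using that the \LMS-order on \LMS-prefixes coincides with the \LMS-order on \LMS-substrings (remarked after the \LMS-order definition), we get the \LMS-substrings in correct \LMS-order, so the naming in Step 4 is well-defined.

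Next I would handle the recursion. Replacing each \LMS-substring of $T_i$ by its rank yields $T'_i$; the standard SAIS argument (unchanged, since it is purely combinatorial on the sequence of \LMS-substrings) shows that two conjugates of strings in ${\cal M}$ starting at \LMS-positions compare in $\omega$-order exactly as the corresponding conjugates of ${\cal M}'$ compare — here I would lean on Lemma~\ref{lemma:lms_prefix}, which tells us that \LMS-order of \LMS-prefixes refines $\omega$-order, so that reading off $\GCA({\cal M}')$ and mapping ranks back to original \LMS-positions places those positions in correct $\omega$-order in $A$. By the inductive hypothesis the recursive call returns $\GCA({\cal M}')$ correctly; the map-back in Step 5 is a linear scan. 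Then Step 6 runs {\tt Induced Sorting} once more from the now-correctly-ordered \LMS-positions: Lemma~\ref{lemma:is_correct} again, but now with the \LMS-positions in their true $\omega$-order, yields that every conjugate lands in its correct \GCA\ position. Finally, the length-1 strings removed in Step 1 are reinserted; Lemma~\ref{lemma:is4omega}(3) says each such string, consisting of a single repeated character $c$, belongs strictly between all L-type and all S-type conjugates of its bucket, so there is a unique correct slot, and the index set $\mathcal{I}$ is read off. Computing $\eBWT({\cal M})$ from the completed $A$ is the linear-time formula $\eBWT({\cal M})[i] = T_d[j-1]$ for $A[i]=(j,d)$.

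For the running time I would give the usual SAIS recurrence: Steps 1--4 and 6 are each $O(N)$ (type assignment, bucketing, two scans in {\tt Induced Sorting}, naming, map-back, $\eBWT$ extraction), and the recursive instance ${\cal M}'$ has total length at most $N/2$, since consecutive \LMS-positions are at distance at least $2$ (between two \LMS-positions there is at least one L-run preceded by one S-run, as noted in the proof of Lemma~\ref{lemma:is_correct}), so each $T'_i$ is at most half the length of $T_i$. Hence $R(N) \le R(N/2) + O(N) = O(N)$. The step I expect to be the main obstacle to write cleanly is the recursion-equivalence claim — verifying that the $\omega$-order comparison of two conjugates of ${\cal M}$ beginning at \LMS-positions agrees with the lexicographic/\LMS\ comparison of the reduced strings over the rank alphabet — because this is where the cyclic nature (a conjugate wraps around, so a "suffix" of $T'_i$ in the reduced instance must correspond to a full conjugate, not a prefix, of $T_i$) interacts most delicately with the absence of an end-of-string symbol; the reductions handling primitivity (Lemma~\ref{le:ebwt_roots}) and length-1 strings (Step 1, Lemma~\ref{lemma:is4omega}(3)) are what make this wrap-around consistent, and I would make sure to invoke them explicitly at that point.
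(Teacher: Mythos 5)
Your proposal is correct and follows essentially the same route as the paper's proof: it invokes the same four supporting lemmas (cyclic type assignment, induced sorting for multisets, the extension of Thm.~3.12 of Nong et al., and the \LMS-prefix/$\omega$-order compatibility lemma) in the same roles, places the length-1 strings via part~3 of the induced-sorting lemma, and uses the identical $N/2$ recursion bound. You are merely more explicit than the paper about the induction on recursion depth and about the delicacy of the recursion-equivalence step, which the paper dispatches in one sentence.
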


\begin{proof}
By Lemma~\ref{lemma:types},  Step 2 correctly assigns the types. Step 3 correctly sorts the \LMS-substrings by Lemma~\ref{lemma:is_correct}. It follows from  Lemma~\ref{lemma:lms_prefix} that the order of the conjugates of the new strings $T'_i$ coincides with the relative order of the \LMS-conjugates. In Step 6, the \LMS-conjugates are placed in $A$ in correct relative order from the recursion; by Lemmas~\ref{lemma:is_correct} and~\ref{lemma:lms_prefix}, this results in the correct placement of all conjugates of strings of length $>1$, while the positioning of the length-1 strings is given by Lemma~\ref{lemma:is4omega}. 
 
 For the running time, note that Step 1 takes time at most $2N$. The {\tt Induced Sorting} procedure also runs in linear time $O(N)$. Finally, since no two \LMS-positions are consecutive, and we remove strings of length $1$, the problem size in the recursion step is reduced to at most $N/2$. 
\end{proof}

\subsection{Computing the BWT for one single string}

The special case where ${\cal M}$ consists of one single string leads to a new algorithm for computing the \BWT, since for a singleton set, the \eBWT\ coincides with the \BWT. To the best of our knowledge, this is the first linear-time algorithm for computing the \BWT\ {\em of a string without an end-of-string character} that uses neither Lyndon rotations nor end-of-string characters. 

We demonstrate the algorithm on a well-known example, $T = \textit{banana}$. We get the following types, from left to right: $LSLSLS$, and all three S-type positions are \LMS. We insert $2,4,6$ into the array $A$; after the left-to-right pass, indices are in the order $2,4,6,1,3,5$, and after the right-to-left pass, in the order $6,2,4,1,3,5$. The \LMS-substring \textit{aba} (pos.\ 6) gets the name $A$, and the \LMS-substring \textit{ana} (pos.\ 2,4) gets the name $B$. In the recursive step, the new string $T' = ABB$, with types $SLL$ and only one \LMS-position $1$, the \GCA\ gets induced in just one pass: $1,3,2$. This maps back to the original string: $6,2,4$, and one more pass over the array $A$ results in $6,4,2,1,5,3$ and the \BWT\ \textit{nnbaaa}. See Figure~\ref{fig:banana}.

\begin{figure}

\resizebox{1\textwidth}{!}{%
\begin{tikzpicture}
    \setlength\tabcolsep{1.5pt} %
    \node (S1) [shape=rectangle,draw,rounded corners] {
        \begin{tabular}{cccccc}
            1 & 2 & 3 & 4 & 5 & 6 \\
            b & a & n & a & n & a \\
            L & S & L & S & L & S \\
            & * & & * & & * \\
            & & & & & 
        \end{tabular}
    };
    \node[align=center,anchor=south west] (lab1) at (S1.north west) {Step 2};
    \node (S2) [right = 0.1cm of S1.north east, anchor=north west,shape=rectangle,draw,rounded corners] {
        \begin{tabular}{cCCCC|C|CC}
            & &$a$ & & & $b$ & $n$ & \\
            \hline 
            S$^\ast$& & 2 & 4 & 6 & & & \\
            L& & & & & 1 & 3 & 5 \\
            S& &6 & 2 & 4 & & & \\
            \hline 
            & &6 & 2 & 4 & 1 & 3 & 5\\
        \end{tabular}
        };
    \node[align=center,anchor=south west] (lab2) at (S2.north west) {Step 3};
    \node (S3) [right =  0.1cm of S2.north east, anchor=north west,shape=rectangle,draw,rounded corners] {
        \begin{tabular}{CCCCCCC}
            6 & & a & b & a & & A\\
            2 & & a & n & a & & B \\
            4 & & a & n & a & & B \\
            & & & & & & \\
            & & & & & & \\
        \end{tabular}
        };
    \node[align=center,anchor=south west] (lab2) at (S3.north west) {Step 4};
    \node (S4) [right =  0.1cm  of S3.north east, anchor=north west,shape=rectangle,draw,rounded corners] {
        \begin{tabular}{ccc}
            1 & 2 & 3  \\
            A & B & B \\
            S & L & L \\
            & * &  
        \end{tabular}\hspace{0.5cm}
        \begin{tabular}{c|cc}
            $A$ & $B$ & \\
            \hline 
            1 & &  \\
            & 3 & 2 \\
            \hline 
            1 & 3 & 2 \\
            & & 
        \end{tabular}
    };
    \node[align=center,anchor=south west] (lab4) at (S4.north west) {Step 5};
    \node (S5) [right =  0.1cm of S4.north east, anchor=north west,shape=rectangle,draw,rounded corners] {
        \begin{tabular}{lccc|c|cc}
            &$a$ & & & $b$ & $n$ & \\
            \hline 
            &6 & 4 & 2 & & & \\
            && & & 1 & 5 & 3\\
            \hline
            \GCA\ & 6 & 4 & 2 & {\bf 1} & 5 & 3\\
            \BWT\ & n & n & b & a & a & a\\
        \end{tabular}
        };
        \node[align=center,anchor=south west] (lab4) at (S5.north west) {Step 6};
    
\end{tikzpicture}
}
\caption{Example for computing the \BWT\ for one string\label{fig:banana}, start index marked in bold.}
\end{figure}

\section{\eBWT \ and prefix-free parsing}

In this section, we show how to extend the prefix-free parsing to build the $\eBWT$.  We define the \emph{cyclic prefix-free parse } for a multiset of strings ${\cal M} = \{T_1, T_2,\ldots,T_m\}$ (with $|T_i|=n_i$, $1\leq i\leq m$) as the multiset of parses ${\cal P} = \{P_1, P_2,\ldots,P_m\}$ with dictionary $D$, where we consider $T_i$ as circular, and $P_i$ is the parse of $T_i$.  We denote by $p_i$ the length of the parse $P_i$.

Next, given a positive integer $w$, let $E$ be a set of  strings of length $w$ called {\em trigger strings}. We assume that each string $T_h \in {\cal M}$ has length at least $w$ and at least one cyclic factor in $E$. %

We divide each string $T_h \in {\cal M}$ into overlapping phrases as follows: a phrase is a circular factor of $T_h$ of length $>w$ that starts and ends with a trigger string and has no internal occurrences of a trigger string. The set of phrases obtained from strings in ${\cal M}$ is the dictionary $D$. The parse $P_h$ can be computed from the string $T_h$ by replacing each occurrence of a phrase in $T_h$ with its lexicographic rank in $D$.

\begin{example}\label{ex:pfp}
Let 
${\cal M} = \{T_1:\textit{CACGTGCTAT},\, T_2:\textit{CCACTTGCTAGA},\, T_3:\textit{CACTTGCTAT}\}$ and let $E = \{{\it AC}, {\it G C}\}$. The dictionary $D$ of the multiset of parses ${\cal P}$ of ${\cal M}$ is $D = \{{\it ACCAC}, {\it ACGTGC}, {\it ACTTGC}, {\it GCTAGAC }, {\it GCTATCAC}\}$ and ${\cal P} = \{{\it 2\,5}, {\it 3\,4\,1}, {\it 3\,5}\}$, where $P_2 = {\it 2\, 5}$ means that the parsing of $T_2$ is given by the second and fifth phrases of the dictionary. Note that the string $T_2$ has a trigger string $\textit{AC}$ that spans the first position of $T_2$.
\end{example}

We denote by ${\cal S}$ the set of suffixes of $D$ having length greater than $w$. The first important property of the dictionary $D$ is that the set ${\cal S}$ {\em prefix-free}, i.e., no string in ${\cal S}$ is prefix of another string of ${\cal S}$.  This follows directly from \cite{DBLP:journals/almob/BoucherGKLMM19}.  %

\begin{example}
Continuing Example~\ref{ex:pfp}, we have that 
\begin{align*}
{\cal S} =\{ 
&{\it ACCAC}, {\it ACGTGC}, {\it ACTTGC}, {\it AGAC}, {\it ATCAC}, {\it CAC}, {\it CCAC}, {\it CGTGC},\\ &{\it CTAGAC},{\it CTATCAC}, {\it CTTGC}, {\it GAC}, {\it GCTAGAC}, {\it GCTATCAC}, {\it GTGC}, \\ &{\it TAGAC}, {\it TATCAC}, {\it TCAC}, {\it TGC}, {\it TTGC} \}
\end{align*}

\end{example}

The computation of $\eBWT$ from the prefix-free parse consists of three steps: computing the cyclic prefix-free parse of ${\cal M}$ (denoted as ${\cal P}$), computing the $\eBWT$ of ${\cal P}$ by using the algorithm described in Section \ref{sec:algo_ebwt}; and lastly, computing the $\eBWT$ of ${\cal M}$ from the $\eBWT$ of ${\cal P}$ using the lexicographically sorted dictionary $D=\{D_1, D_2, \ldots, D_{|D|}\}$ and its prefix-free suffix set ${\cal S}$.  We now describe the last step as follows. 
We define $\delta$ as the function that uniquely maps each character of $T_h[j]$ to the pair $(i,k)$, where with $1\leq i\leq p_h$, $k>w$, and $T_h[j]$ corresponds to the $k$-th character of the $P_h[i]$-th phrase of $D$. We call $i$ and $k$ the \emph{position} and the \emph{offset} of $T_h[j]$, respectively. Furthermore, we define $\alpha$ as  the function that uniquely associates to each conjugate $conj_j(T_h)$ the element $s\in \mathcal{S}$ such that $s$ is the $k$-th suffix of the $P_h[i]$-th element of $D$, where $(i,k)=\delta(T_h[j])$. By extension, $i$ and $k$ are also called the \emph{position} and the \emph{offset} of the suffix $\alpha(\conj_j(T_h))$.

\begin{example}\label{ex:pfp_2}
In Example~\ref{ex:pfp}, $\delta(T_2[4]) = (1,2)$ since $T_2[4]$ is the second character (offset 2) of the phrase ${\it ACTTGC}$, which is the first phrase (position 1) of $P_2$. Moreover, $\alpha(\conj_4(T_2))={\it CTTGC}$ since ${\it CTTGC}$ is the suffix of $D_3$, which is prefix of $\conj_4(T_2) = {\it CTTGCTAGACCA}$. 

\end{example}

\begin{lemma}\label{{lem:unique suffixes}}
Given two strings $T_g,T_h\in \mathcal{M}$, if $\alpha(conj_i(T_g))<_{\lex}\alpha(conj_i(T_h))$ it follows that  $conj_i(T_g)\prec_\omega conj_j(T_h)$.
\end{lemma}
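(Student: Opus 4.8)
The plan is to reduce the $\omega$-order comparison of the two conjugates to the lexicographic comparison of their images under $\alpha$, using two facts from this section: each string $\alpha(\conj_j(T_h))$ is what one reads off $T_h$ cyclically starting from the position of $T_h[j]$ (so it is a prefix of $(\conj_j(T_h))^\omega$, in fact a prefix of $\conj_j(T_h)$ in the non-degenerate case, cf.\ Example~\ref{ex:pfp_2}), and the suffix set $\mathcal S$ is prefix-free.

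First I would make the ``prefix'' statement precise. If $(i,k)=\delta(T_h[j])$ and $D_\ell$ denotes the $P_h[i]$-th element of $D$, then $D_\ell$ occurs as a circular factor of $T_h$ whose $k$-th character is exactly $T_h[j]$; hence reading $T_h$ cyclically from position $j$ yields first $D_\ell[k..|D_\ell|]$, which is precisely $\alpha(\conj_j(T_h))$. Therefore $\alpha(\conj_j(T_h))$ is a prefix of $(\conj_j(T_h))^\omega$.

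Next, set $u=\alpha(\conj_i(T_g))$ and $v=\alpha(\conj_j(T_h))$ and assume $u<_{\lex}v$. Since $<_{\lex}$ is strict, $u\neq v$; since $u,v\in\mathcal S$ and $\mathcal S$ is prefix-free, neither is a prefix of the other, so the inequality $u<_{\lex}v$ is witnessed by a genuine mismatch: there is a position $p\le\min(|u|,|v|)$ with $u[1..p-1]=v[1..p-1]$ and $u[p]<v[p]$. By the prefix statement, $(\conj_i(T_g))^\omega$ has $u$ as a prefix and $(\conj_j(T_h))^\omega$ has $v$ as a prefix, so these two infinite strings agree on their first $p-1$ characters and satisfy $(\conj_i(T_g))^\omega[p]=u[p]<v[p]=(\conj_j(T_h))^\omega[p]$; hence $(\conj_i(T_g))^\omega<_{\lex}(\conj_j(T_h))^\omega$. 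By the definition of the $\omega$-order this is exactly $\conj_i(T_g)\prec_\omega\conj_j(T_h)$.

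The step I expect to need the most care is the prefix statement, since that is where the cyclic nature of the parse enters: one must check from the definitions of $\delta$ and $\alpha$ that the dictionary phrase carrying the starting position of $\conj_j(T_h)$ indeed reproduces $\alpha(\conj_j(T_h))$ when $T_h$ is read cyclically from that position (this relies on the standing assumptions that $|T_h|\ge w$ and that $T_h$ has a cyclic trigger factor, together with the offset condition built into $\delta$). Once that is in hand, the remainder is a short unfolding of the definitions of $<_{\lex}$ and $\prec_\omega$, and uses nothing beyond the prefix-free property of $\mathcal S$.
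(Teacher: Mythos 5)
Your proposal is correct and follows essentially the same route as the paper's (one-sentence) proof, whose entire content is the observation that $\alpha(\conj_i(T_g))$ and $\alpha(\conj_j(T_h))$ are prefixes of the respective conjugates. You additionally spell out the step the paper leaves implicit—that prefix-freeness of $\mathcal{S}$ guarantees the lexicographic comparison is witnessed by a genuine character mismatch, which then transfers to the $\omega$-order—which is a worthwhile clarification but not a different argument.
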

\begin{proof}
It follows from the definition of $\alpha$ that $\alpha(conj_i(T_g))$ and $\alpha(conj_j(T_h))$ are prefixes of $conj_i(T_g)$ and  $conj_j(T_h)$, respectively.
\end{proof}

\begin{proposition}\label{prop:suffix order}
 Given two strings $T_g,T_h\in{\cal M}$. Let $\conj_i(T
 _g)$ and $\conj_j(T_h)$ be the $i$-th and $j$-th conjugates of $T_g$ and $T_h$, respectively, and let $(i',g')=\delta(T_g[i])$ and $(j',h')=\delta(T_h[j])$. 
 Then  $\conj_i(T_g) \prec_\omega \conj_j(T_h)$ if and only if either $\alpha(\conj_i(T_g)) <_\lex \alpha(\conj_j(T_h))$, or $\conj_{i'+1}(P_g) \prec_\omega \conj_{j'+1}(P_h)$, i.e., $P_g[i']$ precedes $P_h[j']$ in $\eBWT({\cal P})$.
  \end{proposition}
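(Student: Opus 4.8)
The plan is to characterize the $\omega$-order on conjugates of $\mathcal{M}$ in terms of two pieces of data: the prefix-free suffix $\alpha(\conj_i(T_g))$ (which is a prefix of the conjugate, by Lemma~\ref{{lem:unique suffixes}}), and, when these suffixes coincide, the $\omega$-order of the ``shifted'' parse conjugates $\conj_{i'+1}(P_g)$ and $\conj_{j'+1}(P_h)$, which by construction of $\eBWT(\mathcal{P})$ is exactly the order in which $P_g[i']$ and $P_h[j']$ appear in $\eBWT(\mathcal{P})$. The key structural fact, inherited from prefix-free parsing~\cite{DBLP:journals/almob/BoucherGKLMM19}, is that each conjugate $\conj_i(T_g)$ decomposes uniquely as $\alpha(\conj_i(T_g))$ followed by the concatenation of the phrases indexed by $P_g[i'+1], P_g[i'+2], \ldots$ (cyclically), with the overlap of $w$ characters between consecutive phrases handled consistently; since $\alpha(\conj_i(T_g))$ has length $>w$, it already determines the phrase boundary, so the ``remainder'' after $\alpha(\conj_i(T_g))$ is precisely spelled out by the cyclic parse starting at position $i'+1$.

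First I would establish the ``easy'' direction of each implication. If $\alpha(\conj_i(T_g)) <_\lex \alpha(\conj_j(T_h))$ and neither is a prefix of the other, Lemma~\ref{{lem:unique suffixes}} gives $\conj_i(T_g) \prec_\omega \conj_j(T_h)$ directly; but I need to rule out that one is a proper prefix of the other, which follows from the prefix-freeness of $\mathcal{S}$ (no element of $\mathcal{S}$ is a prefix of another). So distinct elements of $\mathcal{S}$ are always $<_\lex$-incomparable as prefixes, and $\alpha(\conj_i(T_g)) <_\lex \alpha(\conj_j(T_h))$ forces a genuine mismatch within the first $\min$ of the two lengths, hence $\conj_i(T_g)^\omega <_\lex \conj_j(T_h)^\omega$. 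For the second alternative: if $\alpha(\conj_i(T_g)) = \alpha(\conj_j(T_h)) =: s$, then the two conjugates agree on their first $|s|$ characters, and thereafter each is spelled, cyclically, by its parse. I would argue that comparing $\conj_i(T_g)^\omega$ and $\conj_j(T_h)^\omega$ beyond position $|s|$ reduces to comparing the infinite sequences of phrases $P_g[i'+1]P_g[i'+2]\cdots$ and $P_h[j'+1]P_h[j'+2]\cdots$ read as characters over the phrase alphabet $\{1,\ldots,|D|\}$ — and because the phrases themselves, as strings, are ordered consistently with their $D$-ranks on their length-$(>w)$ suffixes, this phrase-level comparison coincides with the character-level $\omega$-comparison. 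This is exactly the statement that $\conj_{i'+1}(P_g) \prec_\omega \conj_{j'+1}(P_h)$, which in turn by the definition of $\eBWT$ of a multiset means $P_g[i']$ precedes $P_h[j']$ in $\eBWT(\mathcal{P})$ (since the $\eBWT$ outputs the last character, i.e.\ $P_g[i']$, of the sorted conjugate $\conj_{i'+1}(P_g)$).

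For the converse, I would note that the two alternatives on the right-hand side are mutually exclusive (either $\alpha(\conj_i(T_g)) \ne \alpha(\conj_j(T_h))$, in which case prefix-freeness makes them $<_\lex$-comparable and exactly one of $<_\lex$, $>_\lex$ holds; or they are equal, in which case $\conj_{i'+1}(P_g)$ and $\conj_{j'+1}(P_h)$ are distinct — this needs a short argument that equal suffixes plus equal shifted-parse conjugates would force $\conj_i(T_g) = \conj_j(T_h)$, impossible unless $(i,g)=(j,h)$) and together with their ``mirror images'' they exhaust all cases. So the forward implications proved above, applied also with the roles of $(i,g)$ and $(j,h)$ swapped, give the converse by contraposition/trichotomy: if $\conj_i(T_g) \prec_\omega \conj_j(T_h)$ but neither right-hand alternative held, then one of the swapped alternatives would hold, yielding $\conj_j(T_h) \prec_\omega \conj_i(T_g)$, a contradiction.

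The main obstacle I anticipate is making the ``spelling'' argument fully rigorous: precisely, showing that once two conjugates share the prefix $s = \alpha(\cdot) \in \mathcal{S}$ of length $>w$, their continuations agree letter-by-letter with the phrase-by-phrase expansion of the respective cyclic parses, including correct bookkeeping of the length-$w$ overlaps between consecutive dictionary phrases and the wrap-around at the end of $P_g$ (this is where circularity of both $T_h$ and $P_h$ matters, and where one uses that every $T_h$ has at least one cyclic occurrence of a trigger string so the parse is well-defined and the phrase decomposition is genuinely cyclic). Once that correspondence is pinned down — essentially a cyclic analogue of the phrase/parse consistency lemma of~\cite{DBLP:journals/almob/BoucherGKLMM19} — the order equivalence follows because $\omega$-comparison of the conjugates is decided at the first mismatch, which occurs either already inside the $\mathcal{S}$-prefixes or at the first differing phrase, matching exactly the first mismatch in the $\omega$-comparison of the shifted parse conjugates.
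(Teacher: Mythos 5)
Your proposal is correct and follows essentially the same route as the paper's proof: decompose each conjugate as its $\alpha$-prefix followed by the string spelled by the cyclic parse (with the $w$-character overlap), decide the order by the $\alpha$-prefixes when they differ, and otherwise reduce to the $\omega$-order of the parse conjugates shifted to the next phrase start, i.e.\ positions $i'+1$ and $j'+1$. The paper's argument is far terser; you additionally spell out the prefix-freeness of ${\cal S}$ (to rule out one $\alpha$-value being a prefix of the other) and the trichotomy argument for the converse, which the paper leaves implicit.
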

  \begin{proof}
   By definition of $\alpha$,  $\conj_i(T_g)=\alpha(\conj_i(T_g))T_g[i+g'']T_g[i+g''+1]\ldots T_g[i-1]$ and $\conj_j(T_h)=\alpha(\conj_j(T_h))T_h[j+h'']T_h[j+h''+1]\ldots T_h[j-1]$, where $g''=|\alpha(\conj_i(T_g))|$ and  $h''=|\alpha(\conj_j(T_h))|$, respectively.  Moreover, $\conj_i(T_g) \prec_\omega \conj_j(T_h)$ if and only if either $\alpha(\conj_j(T_h))<_\lex \alpha(\conj_j(T_h))$ or $\conj_{i+g''-w}(T_g) \prec_\omega \conj_{j+h''-w}(T_h)$, where $w$ is the length of trigger strings. It is easy to verify that the position of $T_g[i+g''-w]$ and $T_h[j+h''-w]$  is $i'+1$ and $j'+1$, respectively. Moreover, since $T_g[i+g''-w]$ and $T_h[j+h''-w]$  are the first character of a phrase, we have that $\conj_{i+g''-w}(T_g) \prec_\omega \conj_{j+h''-w}(T_h)$ if and only if $\conj_{i'+1}(P_g) \prec_\omega \conj_{j'+1}(P_h)$.
  \end{proof}

 Next, using Proposition~\ref{prop:suffix order}, we define how to build the $\eBWT$ of the multiset of strings ${\cal M}$ from ${\cal P}$ and $D$.  First, we note that we will iterate through all the suffixes in ${\cal S}$ in lexicographic order, and build the $\eBWT$ of ${\cal M}$ in blocks corresponding to the suffixes in ${\cal S}$.  Hence, it follows that we only need to describe how to build an $\eBWT$ block corresponding to a suffix $s\in{\cal S}$. Given $s \in {\cal S}$, we let ${\cal S}_s$ be the set of the lexicographic ranks of the phrases of $D$ that have $s$ as a suffix, i.e., ${\cal S}_s = \{ i \mid 1 \leq i \leq |D|, s \text{ is a suffix of } D_i\in D \}$. Moreover, given the string $T_h \in {\cal M}$, we let $\conj_i(T_h)$ be the $i$-th conjugate of $T_h$, let $j$ and $k$ be the position and offset of $T_h[i]$, and lastly, let $p$ be the position of $P_h[j]$ in $\eBWT({\cal P})$. We define $f(p,k)=D_{P_h[j]}[k-1]$ if $k>1$, otherwise $f(p,k)=D_{P_h[j-1]}[|D_{P_h[j-1]}|-w]$ where we view $P_h$ as a cyclic string. %

\begin{example}
In Example~\ref{ex:pfp}, %
$\eBWT({\cal P}) = {\it 4\,5\,1\,5\, 3\, 2\,3}$. 
Let us consider $\conj_4(T_2)$ and $\conj_3(T_3)$ that are both mapped to the suffix $CTT$ by the function $\alpha$.  By using Example \ref{ex:pfp_2}, the position and the offset of $T_2[4]$ are $1$ and $2$, respectively. 
The position of $P_2[1]=3$ in $\eBWT({\cal P})$ is $5$, because $\conj_2(P_2) \prec_\omega \conj_2(P_3)$. This implies that $\conj_4(T_2) \prec_\omega \conj_3(T_3)$ by Proposition~\ref{prop:suffix order}. 
Furthermore, $f(5,2) = T_2[3] = {\it A}$.
\end{example}

Finally, we let ${\cal O}_s$ be the set of pairs $(p,c)$ such that for all $d \in {\cal S}_s$, $p$ is the position of an occurrence of $d$ in $\eBWT({\cal P})$, and $c$ is the character resulting the application of the $f$ function considering as $k$ the offset of $s$ in $D_d$, i.e., $c = f(p,|D_d| - |s|+1)$. Formally, ${\cal O}_s = \{ (p, f(p,|D_{\eBWT({\cal P})[p]}| - |s|+1) \mid \eBWT({\cal P})[p] \in {\cal S}_s  \}$.

\begin{example}
In Example~\ref{ex:pfp}, if $s={\it CAC} \in {\cal S}$ and ${\cal S}_s = \{{\it 1}, {\it 5}\}$, where ${\it 1: ACCAC}$ and ${\it 5: GCTATCAC}$, then it follows that ${\cal O}_s = \{(3,{\it C}), (2,{\it T}), (4,{\it T}) \}$ since the phrase ${\it 1}$ is in position 3 in the $\eBWT({\cal P})$ and the suffix ${\it CAC}$ starts in position 3 of $D_1$, the character preceding the occurrences of ${\it CAC}$ corresponding to the phrase ${\it 1}$ is ${\it C}$. Analogously, the phrase ${\it 5}$ is in positions 2 and 4 in the $\eBWT({\cal P})$ and the suffix ${\it CAC}$ starts in position 6 of $D_5$, hence the character preceding the occurrences of ${\it CAC}$ corresponding to the phrase ${\it 5}$ is ${\it T}$.%
\end{example}

To build the $\eBWT$ block corresponding to $s \in {\cal S}$, we scan the set ${\cal O}_s$ in increasing order of the first element of the pair, i.e., the position of the occurrence in $\eBWT({\cal P})$, and concatenate the values of the second element of the pair, i.e., the character preceding the occurrence of $s$ in $T_h$.%
Note that if all the occurrences in ${\cal O}_s$ are preceded by the same character $c$, we do not need to iterate through all the occurrences but rather concatenate $|{\cal O}_s|$ copies of the character $c$.

\begin{example}
In Example~\ref{ex:pfp}, $\eBWT({\cal M})={\it GCCCTTT\underline{TCT}AAGGGAAATTTCCCCAATGTCC}$, where the block of the \eBWT\ corresponding to the suffix $s = {\it CAC}\in {\cal S}$ is underlined. Given ${\cal O}_s = \{(3,{\it C}), (2,{\it T}),(4,{\it T}) \}$, we generate the block by sorting ${\cal O}_s$ by the first element of each pair -- resulting in ${\cal O}_s = \{(2,{\it T}), (3,{\it C}),(4,{\it T})\}$ -- and   concatenating the second element of each pair obtaining ${\it TCT}$. 
\end{example}

\paragraph*{Keeping track of the first rotations.}
So far, we showed how to compute the first component of the \eBWT. Now we show how to compute the second component of the \eBWT\, i.e., the set of indices marking the first rotation of each string. The idea is to keep track of the starting positions of each text in the parse, by marking the offset of the first position of each string in the last phrase of the corresponding parse. We propagate this information during the computation of the \eBWT\ of the parse. When scanning the suffixes of ${\cal S}$, we check if one of the phrases sharing the same suffix $s\in{\cal S}$ is marked as a phrase containing a starting position, and if the offset of the starting position coincides with the offset of the suffix. If so, when generating the elements of ${\cal O}_s$, we mark the element corresponding to the occurrence of the first rotation of a string, and we output the index of the \eBWT\ when that element is processed.

\paragraph*{Implementation notes.}

In practice, as in~\cite{DBLP:journals/almob/BoucherGKLMM19}, we implicitly select the set of trigger strings $E$, by rolling a Karp-Rabin hash over consecutive windows of size $w$ and take as a trigger strings of length $w$ all windows such that their hash value is congruent $0$ modulo a parameter $p$. In our version of the PFP, we also need to ensure that there is at least one trigger string on each sequence of the collection. Hence, we change the way we select the trigger strings as follows. We define a set ${\cal D}$ of remainders and we select a window of length $w$ as a trigger string with hash value congruent $d$ modulo $p$ if $d \in {\cal D}$. Note that if we set ${\cal D} = \{0\}$ we obtain the same set of trigger strings as in the original definition. We choose the set ${\cal D}$ in a greedy way. We start with ${\cal D} = \{0\}$ by scanning the set of sequences and checking if the current sequence has a trigger string according to the current ${\cal D}$. As soon as we find one, we move to the next sequence. If we don't find any trigger string, we take the reminder of the last window we checked, and we include it in the set ${\cal D}$.

We note that we consider ${\cal S}$ to be the set of suffixes of the phrases of $D$ such that $s\in{\cal S}$ is not a phrase in $D$ nor it has length  smaller than $w$ in the implementation. This allows us to compute $f$ more efficiently since we can compute the preceding character of all the occurrences of a suffix in ${\cal S}$ from its corresponding phrase in $D$. 
Moreover, as in~\cite{DBLP:journals/almob/BoucherGKLMM19}, for each phrase in $D$, we keep an ordered list of their occurrences in the \eBWT{} of the parse. For a given suffix $s \in {\cal S}$, we do not generate ${\cal O}_s$ all at once and sort it -- but rather, we visit the elements of ${\cal O}_s$ in order using a min-heap as we merge the ordered lists of the occurrences in the \eBWT\ of the parse of the phrases that share the same suffix $s$.

\section{Experimental results}

We implemented the algorithm for building the $\eBWT$ and measured its performance on real biological data. We performed the experiments on a server with Intel(R) Xeon(R) CPU E5-2620 v4 @ 2.10GHz with 16 cores and 62 gigabytes of RAM running Ubuntu 16.04 (64bit, kernel 4.4.0). The compiler was \texttt{g++} version 9.4.0 with \texttt{-O3 -DNDEBUG -funroll-loops -msse4.2} options. We recorded the runtime and memory usage using the wall clock time, CPU time, and maximum resident set size from \texttt{/usr/bin/time}. The source code is available online at: \url{https://github.com/davidecenzato/PFP-eBWT}.

We compared our method (\ours) with the BCR algorithm implementation of~\cite{rope} (\rope), \gsufsort~\cite{louza2020gsufsort}, and \egap~\cite{egidi2019external}. We did not compare against \texttt{G2BWT}~\cite{diaz2021efficient}, \texttt{lba}~\cite{DBLP:journals/tcs/BonizzoniVPPR21}, and \texttt{BCR}~\cite{DBLP:journals/tcs/BauerCR13} since they are currently implemented only for short reads\footnote{\texttt{G2BWT} crashed and \texttt{BCR} did not terminate within 48 hours with the smallest of each dataset; \texttt{lba} works only with sequences of length up to 255}. %
We did not compare against \texttt{egsa}~\cite{egsa} since it is the predecessor of \egap{} or against methods that construct the \BWT\ of a multiset of strings using one of the methods we evaluated against, i.e.,  {\tt LiME}~\cite{meta}, {\tt BEETL}~\cite{DBLP:journals/bioinformatics/CoxBJR12}, {\tt metaBEETL}~\cite{Ander2013}, and  {\tt ebwt2snp}~\cite{PrezzaPSR19,PrezzaPSR20}. 

\subsection{Datasets}

We evaluated our method using 2,048 copies of human chromosomes 19 from the 1000 Genomes Project~\cite{1000genomes}; 10,000 {\it Salmonella} genomes taken from the GenomeTrakr project~\cite{genometrakr}, and 400,000 SARS-CoV2 genomes from EBI’s COVID-19 data portal~\cite{covid-data-portal}.  The sequence data for the {\it Salmonella} genomes were assembled, and the assembled sequences that had length less than 500 bp were removed.  In addition, we note that we replaced all degenerate bases in the SARS-CoV2 genomes with N's and filtered all sequences with more than 95\% N's. A brief description of the datasets is reported in Table~\ref{tab:realdatasets}. We used 12 sets of variants of human chromosome 19 (\chr19), containing $2^i$ variants for $i=0,\ldots, 11$ respectively. We used 6 collections of {\it Salmonella} genomes (\salmonella) containing 50, 100, 500, 1,000, 5,000, and 10,000 genomes respectively. We used 5 sets of SARS-CoV2 genomes (\sars) containing 25,000, 50,000, 100,000, 200,000, 400,000 genomes respectively. Each collection is a superset of the previous one. 

\begin{table*}[tbp]%
	 		\centering
 			\scalebox{1}{%
    	 		\sisetup{detect-weight = true,
    	 		detect-inline-weight = text,
    	 		table-number-alignment = right,
    	 		round-mode=places,
    	 		round-precision=2}
		 		\begin{tabular}{llS[table-format=3.0]S[table-format=6.2]S[table-format=4.2]}
		 			\hline
		 			Name & Description & $\sigma$ & {$n/10^6$} & {$n/r$}\\
		 			\hline
		 			{\chr19} & Human chromosome 19 & 5 & 121086.621263 &  2199.2135365161366 \\
		 			{\salmonella} & {\it Salmonella} genomes  & 4 & 48791.745168 & 112.71998894565677\\
		 			{\sars} & SARS-CoV2 genomes  & 5 & 11930.960556 &  1424.6511624241432 \\
		 			\hline
		 		\end{tabular}
		 	}
	 		\caption{Datasets used in the experiments.  We give the alphabet size in column 3.  We report the length of the file and the ratio of the length to the number of runs in the \eBWT\ in columns 4 and 5, respectively.
	 		\label{tab:realdatasets}}
	 	\end{table*}

\subsection{Setup}
We run \ours\ and \rope\ with 16 threads, and \gsufsort\ and \egap\ with a single thread since they do not support multi-threading.  Using \ours{}, we set $w=10$ and $p=100$. Furthermore, for \ours\ on the \salmonella\ dataset, we used up to three different remainders to build the \eBWT. We used \rope{} with the \texttt{-R} flag to exclude the reverse complement of the sequences from the computation of the $\BWT$.
All other methods were run with default parameters. 

We repeated each experiment five times, and report the average CPU time and peak memory for the set of chromosomes 19 up to 64 distinct variants, for {\it Salmonella} up to 1,000 sequences, and for all SARS-CoV2.
The experiments that exceeded 48 hours of wall clock time or exceeded 62 GB of memory were omitted for further consideration, e.g., 128 sequences of \chr19{}, 5000 sequences of \salmonella\, and 400,000 sequences of \sars\ for \egap{}. Furthermore, \gsufsort{} failed to successfully build the $\eBWT$ for 256 sequences of \chr19{}, 5000 sequences of \salmonella, and 400,000 sequences of \sars\ or more, because it exceeded the 62GB memory limit.

\subsection{Results}

In Figures~\ref{fig:chr19}, ~\ref{fig:salmonella}, and~\ref{fig:sars} we illustrate the construction time and memory usage to build the \eBWT\ and the \BWT\ of collections of strings for the chromosome 19 dataset, the {\it Salmonella} dataset, and the SARS-CoV2 dataset, respectively.

\ours{} was the fastest method to build the \eBWT\ of 4 or more sequences of chromosome 19, with a maximum speedup of 7.6x of wall-clock time and 2.9x of CPU time over \rope{} on 256 sequences of chromosomes 19, 2.7x of CPU time over \egap{} on 64 sequences, and 3.8x of CPU time over \gsufsort{} on 128 sequences. On {\it Salmonella} sequences, \ours{} was always the fastest method, except for 10,000 sequences where \rope{} was the fastest method on wall-clock time. \ours{} had a maximum speedup of 3.0x of wall-clock time over \rope{} on 100 sequences of \salmonella. Considering the CPU time, \ours\ was the fastest for $\geq$ 500 sequences with a maximum speedup of 1.7x over \rope\ on 100 sequences and 1.2x over \gsufsort{} and \egap\ on 1,000 sequences. On SARS-CoV2 sequences, \ours\ was always the fastest method, with a maximum speedup of 2.4x of wall-clock time over \rope\, while a maximum speedup of 1.3x of CPU time over \rope\ on 400,000 sequences, 2.9x over \gsufsort\, and 2.7x over \egap\ on 200,000 sequences of SARS-CoV2.

Considering the peak memory, on the chromosomes 19 dataset, \rope{} used the smallest amount of memory for 1, 2, 4, 8, and 2,048 sequences, while \ours{} used the smallest amount of memory in all other cases. \ours{} used a maximum of 5.6x less memory than \rope{} on 256 sequences of chromosomes 19, 28.0x less than \egap{} on 64 sequences, and 45.3x less than \gsufsort{} on 128 sequences. On {\it Salmonella} sequences, \ours{} used more memory than \rope{} for 50, 100, and 10,000 sequences, while \ours{} used the smallest amount of memory on all other cases. The largest gap between \rope{} and \ours{} memory peak is of 1.7x on 50 sequences. On the other hand, \ours{} used a maximum of 17.0x less memory than \egap{} and \gsufsort{} on 1,000 sequences. On SARS-CoV2 sequences, \ours\ always used the smallest amount of memory, with a maximum of 6.4x less memory than \rope\ on  25,000 sequences of SARS-CoV2, 57.1x over \gsufsort\ and \egap\ on 200,000 sequences.

The memory peak of \rope{} is given by the default buffer size of 10 GB, and the size of the run-length encoded $\BWT$ stored in the rope data structure. This explains the memory plateau on 10.5 GB of \rope{} on the chromosomes 19 dataset. However, \rope{} is able only to produce the $\BWT$ of the input sequence collection, while \ours{} can be trivially extended to produce also the samples of the conjugate array at the run boundaries with negligible additional costs in terms of time and peak memory. 
\begin{figure}[t!]
    \centering
 	\begin{subfigure}[c]{0.49\textwidth}
 	\centering
 		\includegraphics[width=\textwidth]{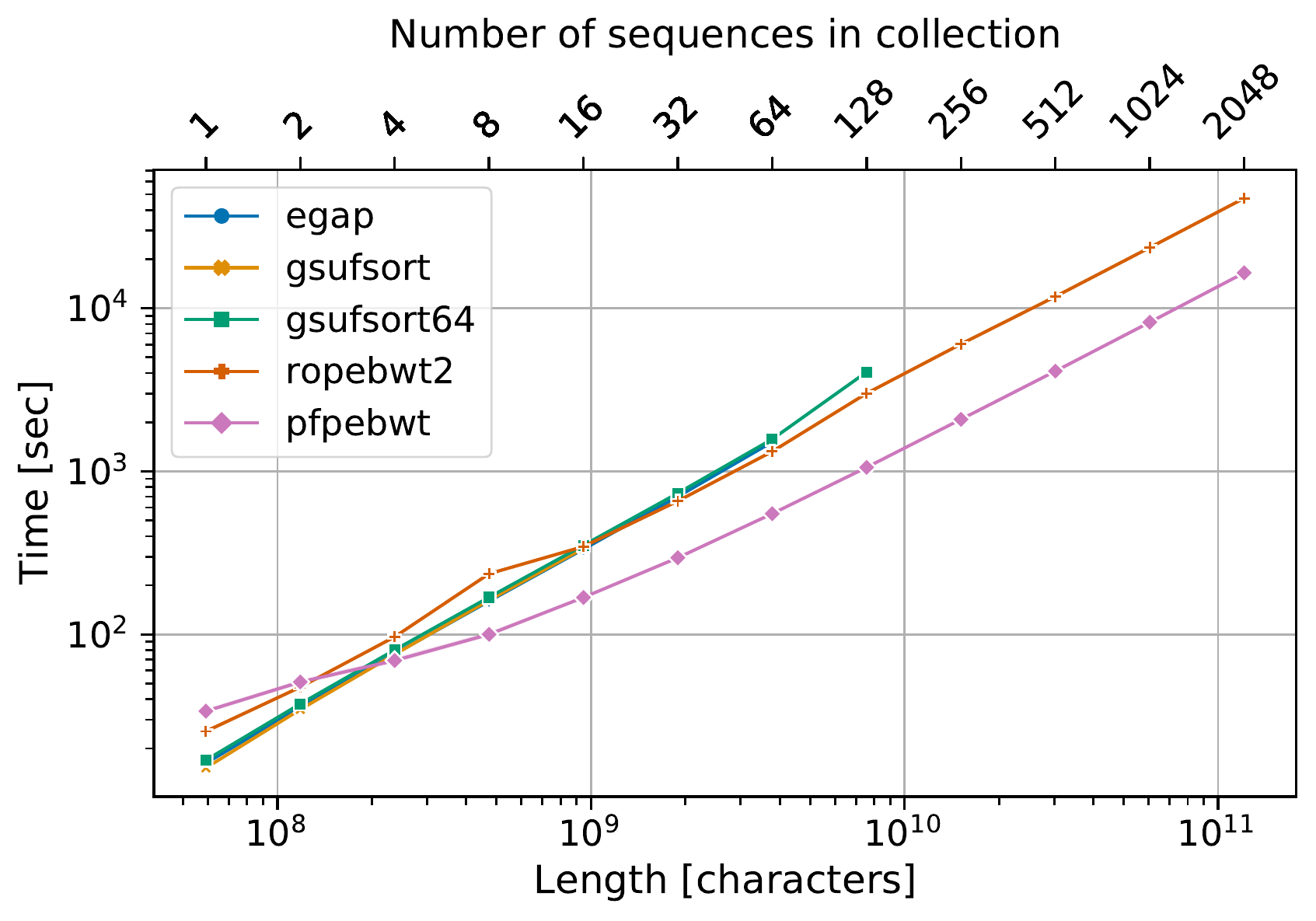}
 		\caption{Construction time. \label{fig:chr19:time}}
 \end{subfigure}\hfill	
 \begin{subfigure}[c]{0.49\textwidth}
 \centering
 		\includegraphics[width=\textwidth]{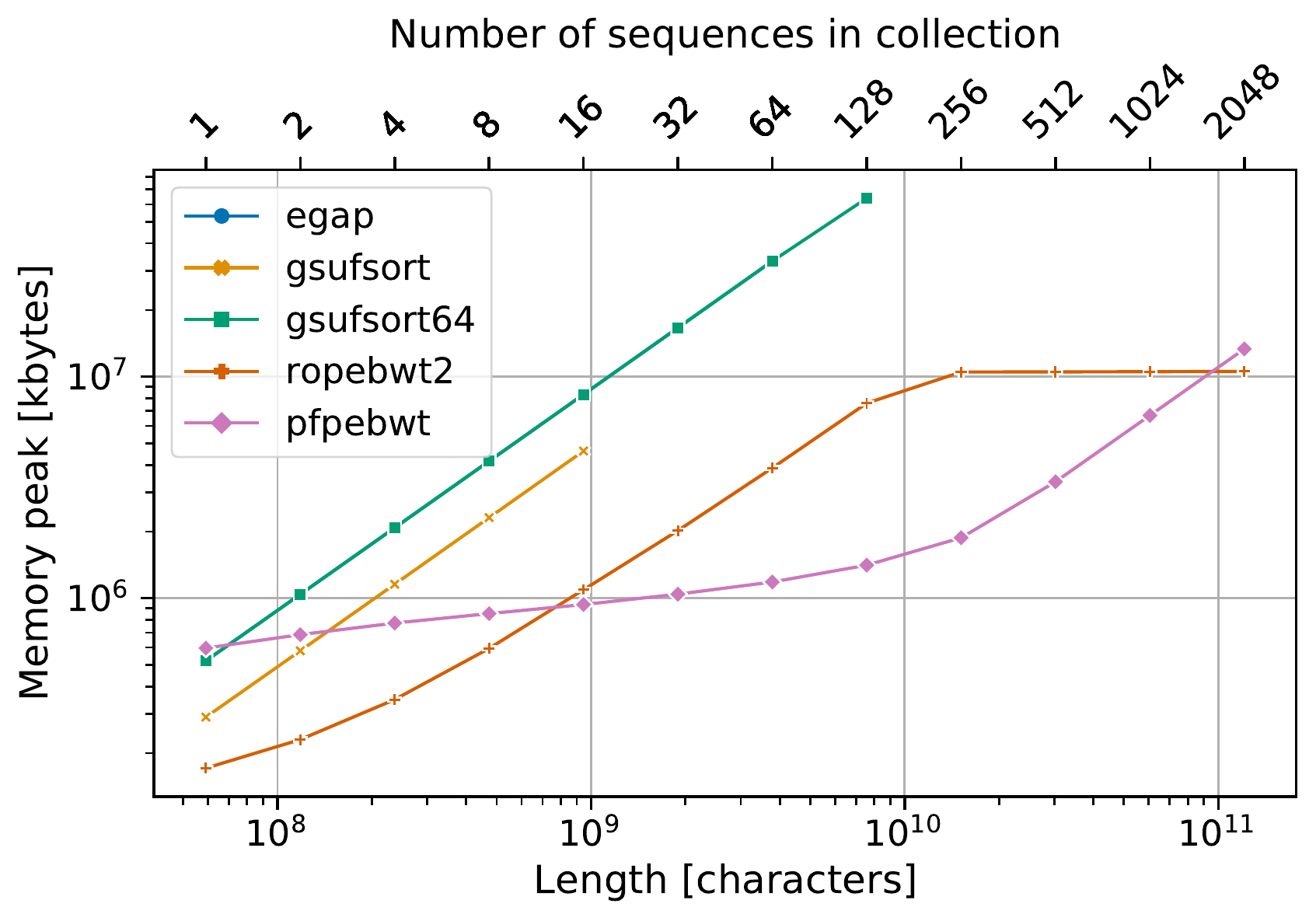}
 		\caption{Peak memory. \label{fig:chr19:space}}
\end{subfigure}
\caption{Chromosome 19 dataset construction CPU time and peak memory usage. We compare \ours{} with \rope, \gsufsort, and \egap. \label{fig:chr19}}
\end{figure}
\begin{figure}[t!]
    \centering
 	\begin{subfigure}[c]{0.49\textwidth}
 		\includegraphics[width=\textwidth]{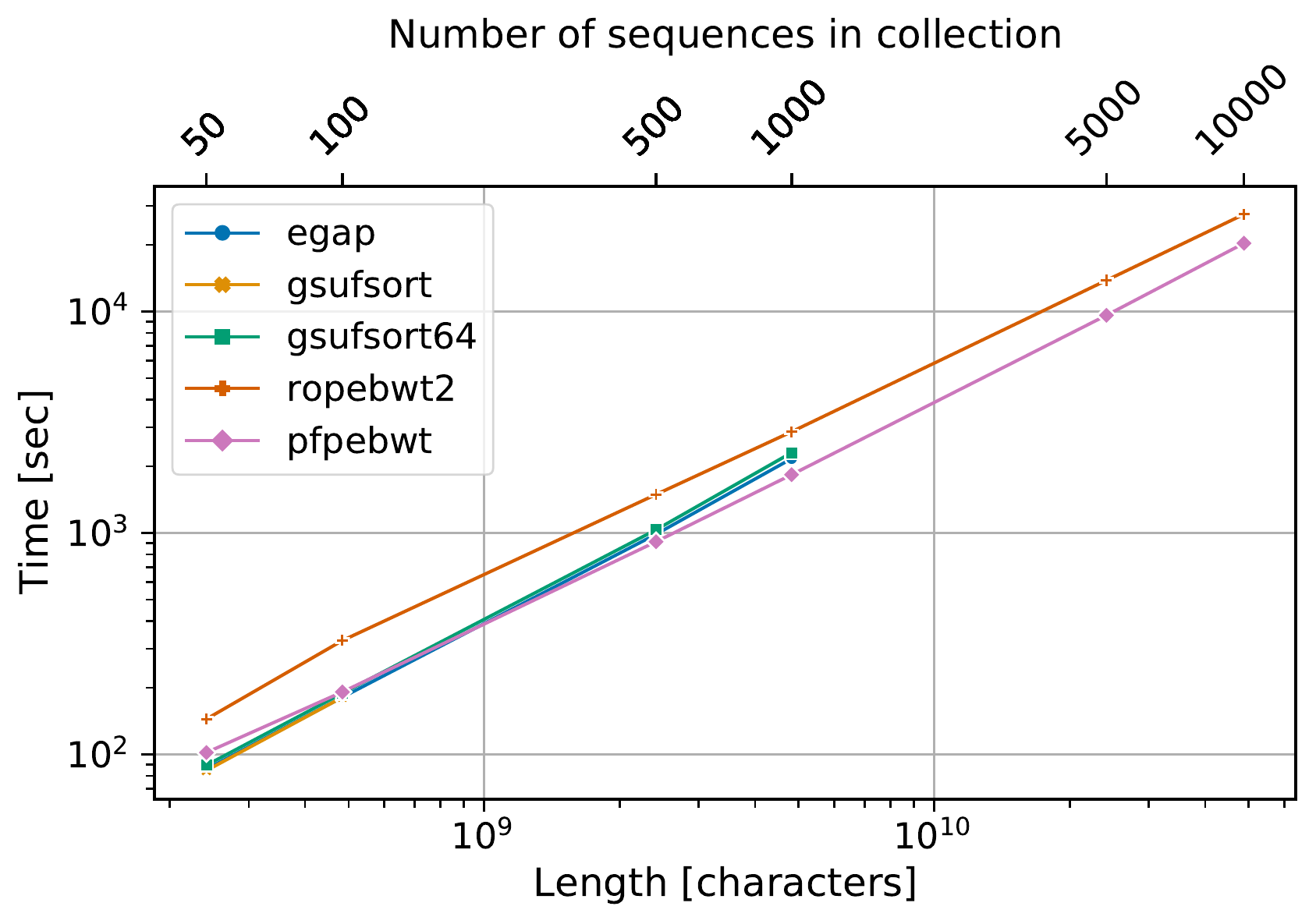}
 		\caption{Construction time. \label{fig:salmonella:time}}
 \end{subfigure}\hfill	
 \begin{subfigure}[c]{0.49\textwidth}
 \centering
 		\includegraphics[width=\textwidth]{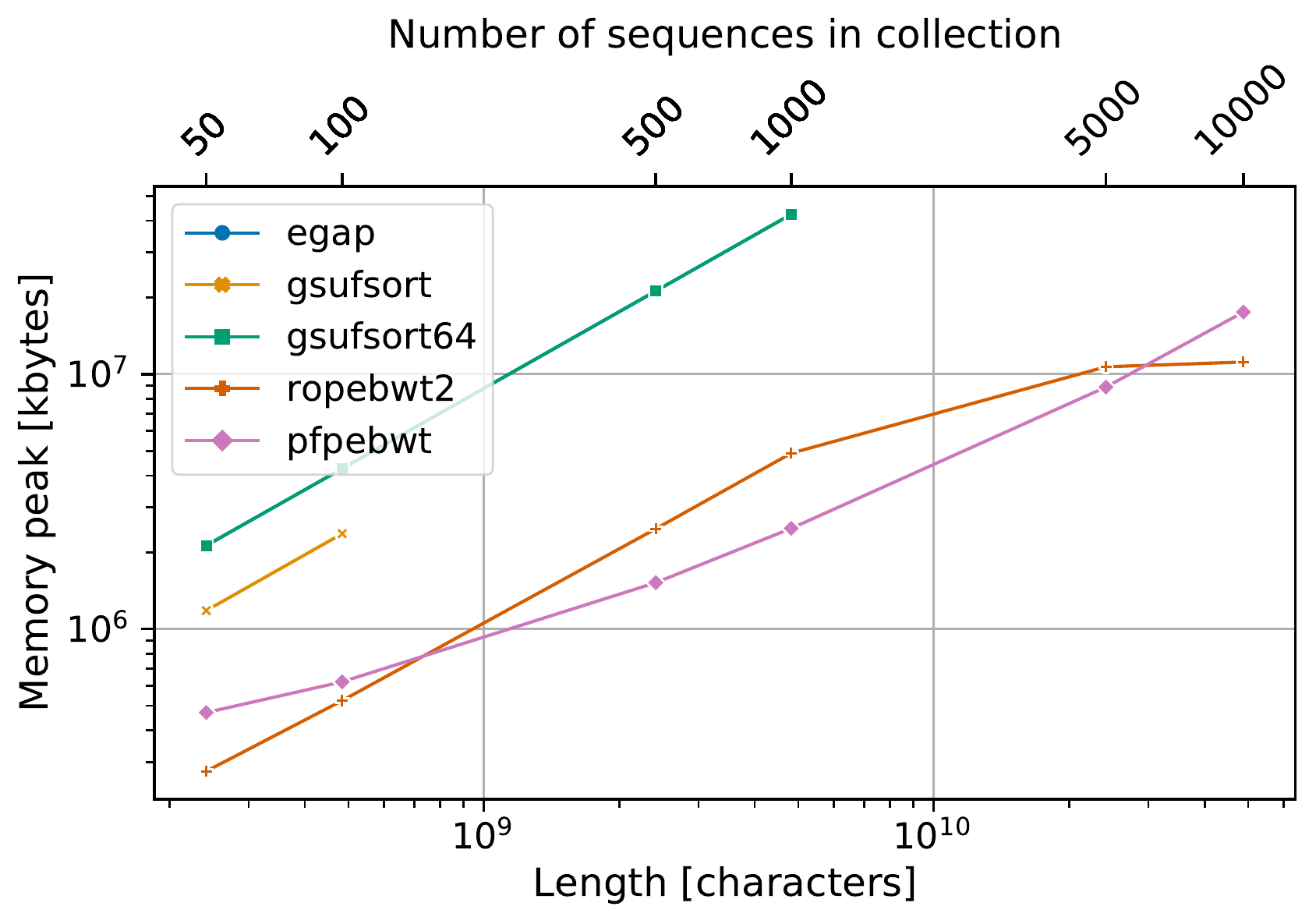}
 		\caption{Peak memory. \label{fig:salmonella:space}}
\end{subfigure}
\caption{{\it Salmonella} dataset construction CPU time and peak memory usage. We compare \ours{} with \rope, \gsufsort, and \egap. \label{fig:salmonella}}
\end{figure}
\begin{figure}[t!]
    \centering
 	\begin{subfigure}[c]{0.49\textwidth}
 		\includegraphics[width=\textwidth]{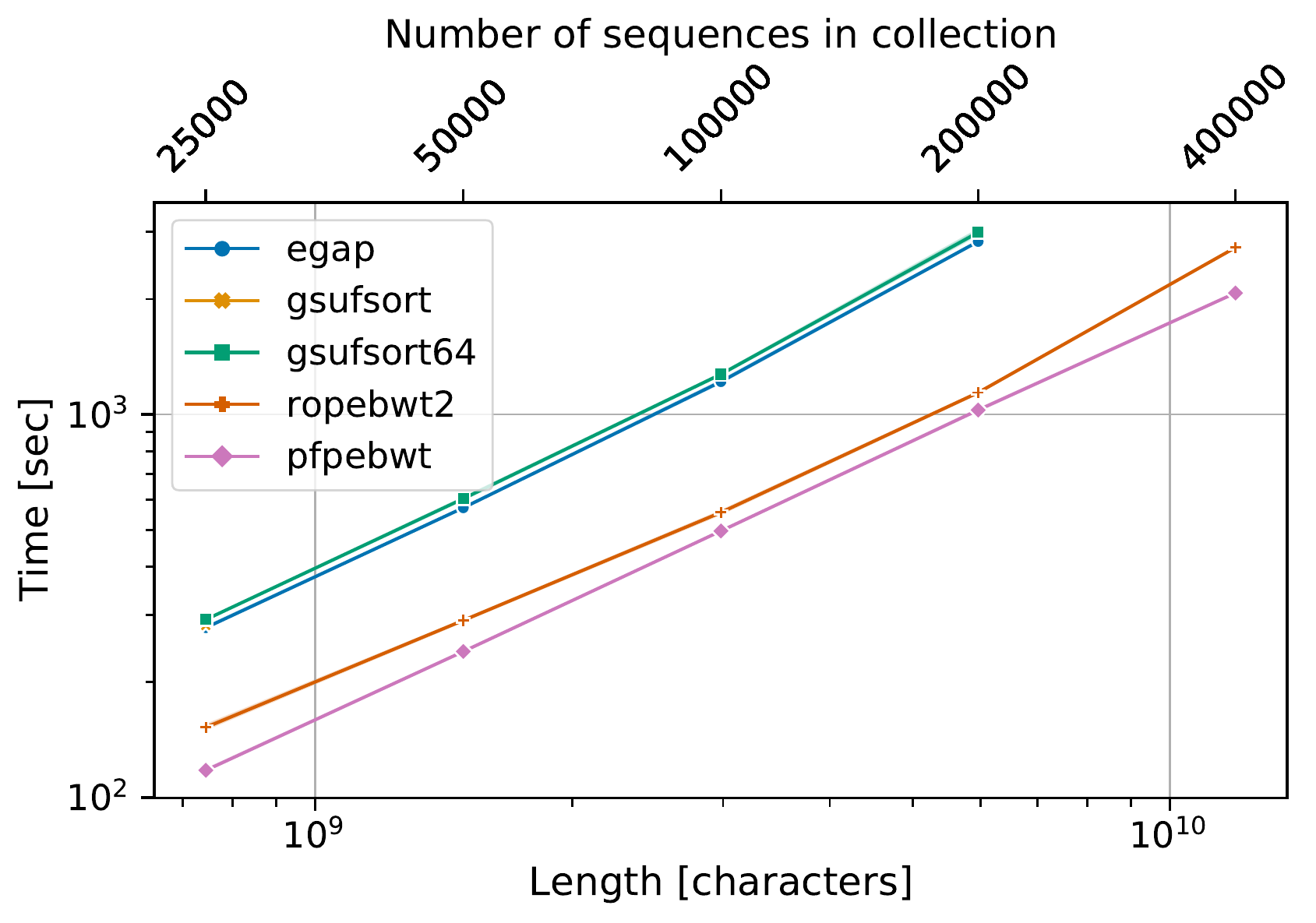}
 		\caption{Construction time. \label{fig:sars:time}}
 \end{subfigure}\hfill	
 \begin{subfigure}[c]{0.49\textwidth}
 \centering
 		\includegraphics[width=\textwidth]{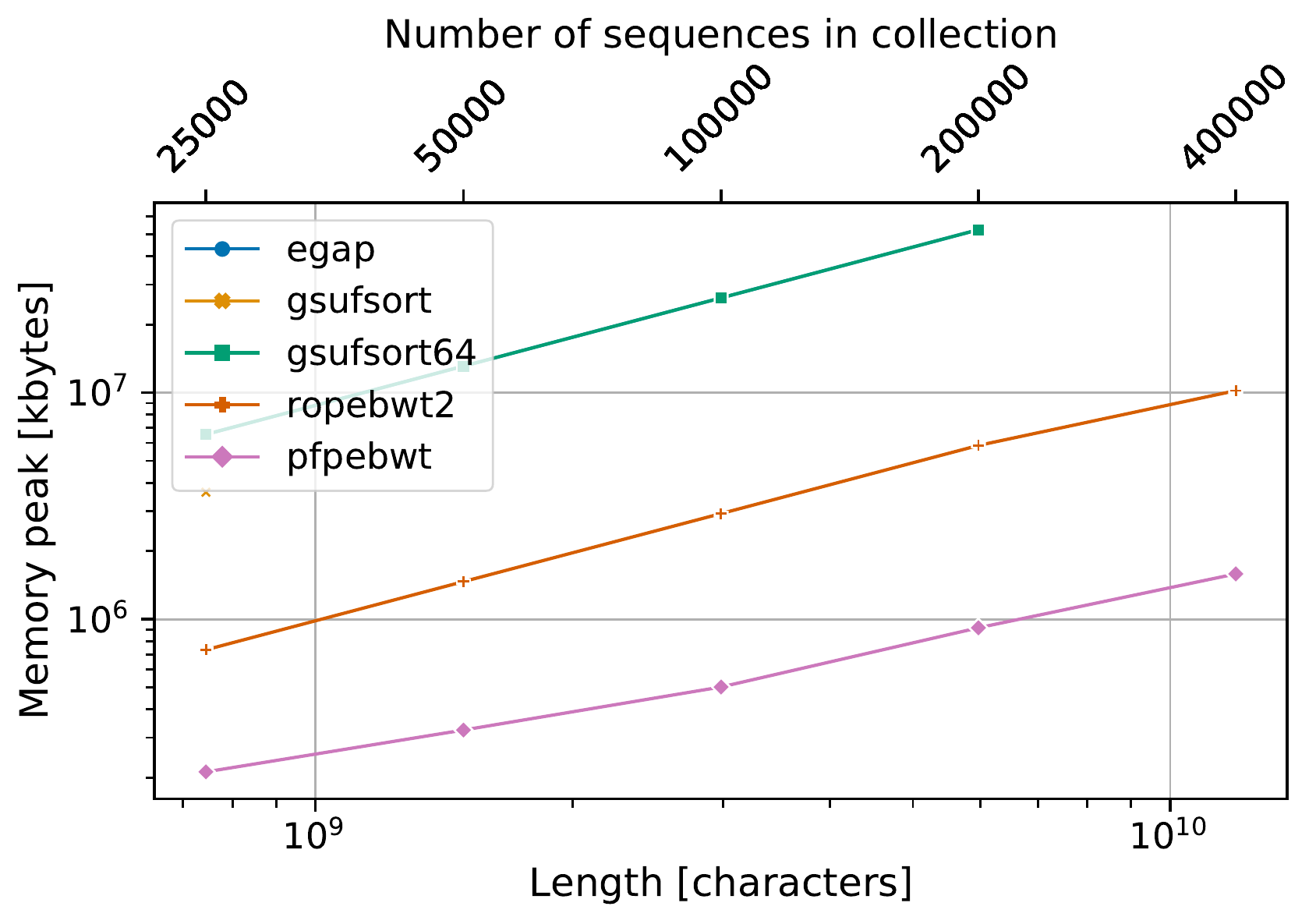}
 		\caption{Peak memory. \label{fig:sars:space}}
\end{subfigure}
\caption{SARS-CoV2 dataset construction CPU time and peak memory usage. We compare \ours{} with \rope, \gsufsort, and \egap. \label{fig:sars}}
\end{figure}

\section{Conclusion} 

We described the first linear-time algorithm for building the $\eBWT$ of a collection of strings that does not require the manipulation of the input sequence, i.e., neither the addition of an end-of-string character, nor computing and sorting the Lyndon rotations of the input strings. We also combined our algorithm with an extension of the prefix-free parsing to enable scalable construction of the $\eBWT$. We demonstrated $\ours$ was efficient with respect to both memory and time when the input is highly repetitive.  Lastly, we curated a novel dataset of 400,000 SARS-CoV2 genomes from EBI’s COVID-19 data portal, which we believe will be important for future benchmarking of data structures that have potential use in bioinformatics.  %

\bibliography{bibliography.bib}

\begin{thebibliography}{10}

\bibitem{Ander2013}
C.~Ander, O.B. Schulz-Trieglaff, J.~Stoye, and A.J. Cox.
\newblock {{metaBEETL: high-throughput analysis of heterogeneous microbial
  populations from shotgun DNA sequences}}.
\newblock {\em BMC Bioinf}, 14(5):S2, 2013.

\bibitem{BannaiKKP21}
H.~Bannai, J.~K\"{a}rkk\"{a}inen, D.~K\"{o}ppl, and M.~Piatkowski.
\newblock Constructing the bijective and the extended
  {Burrows-Wheeler-Transform} in linear time.
\newblock In {\em Proc. of CPM}, 2021.

\bibitem{personal}
J.B. Barwell, R.B.G. O'Sullivan, L.K. Mansbridge, J.M. Lowry, and H.R. Dorkins.
\newblock {Challenges in implementing genomic medicine: the 100,000 Genomes
  Project}.
\newblock {\em J Transl Genet Genome}, 2(13), 2018.

\bibitem{DBLP:journals/tcs/BauerCR13}
M.J. Bauer, A.J. Cox, and G.~Rosone.
\newblock Lightweight algorithms for constructing and inverting the {BWT} of
  string collections.
\newblock {\em Theor Comput Sci}, 483:134--148, 2013.

\bibitem{Berner}
A.M. Berner, G.J. Morrissey, and N.~Murugaesu.
\newblock Clinical analysis of whole genome sequencing in cancer patients.
\newblock {\em Curr Genet Med Rep}, 7:136–143, 2019.

\bibitem{DBLP:journals/tcs/BonizzoniVPPR21}
Paola Bonizzoni, Gianluca~Della Vedova, Yuri Pirola, Marco Previtali, and
  Raffaella Rizzi.
\newblock Computing the multi-string {BWT} and {LCP} array in external memory.
\newblock {\em Theor. Comput. Sci.}, 862:42--58, 2021.

\bibitem{boucher2020pfp}
Christina Boucher, Ondrej Cvacho, Travis Gagie, Jan Holub, Giovanni Manzini,
  Gonzalo Navarro, and Massimiliano Rossi.
\newblock {PFP} compressed suffix trees.
\newblock In {\em Proc.\ of the Symposium on Algorithm Engineering and
  Experiments (ALENEX 2021)}, pages 60--72. {SIAM}, 2021.

\bibitem{DBLP:journals/almob/BoucherGKLMM19}
Christina Boucher, Travis Gagie, Alan Kuhnle, Ben Langmead, Giovanni Manzini,
  and Taher Mun.
\newblock Prefix-free parsing for building big bwts.
\newblock {\em Algorithms Mol. Biol.}, 14(1):13:1--13:15, 2019.

\bibitem{BW94}
M.~Burrows and D.J. Wheeler.
\newblock A block sorting lossless data compression algorithm.
\newblock Technical Report 124, Digital Equipment Corporation, 1994.

\bibitem{CL21}
Davide Cenzato and {\relax Zs}uzsanna Lipt{\'a}k.
\newblock On different variants of the extended {Burrows-Wheeler-Transform}.
\newblock Unpublished manuscript, 2021.

\bibitem{DBLP:journals/bioinformatics/CoxBJR12}
A.J. Cox, M.J. Bauer, T.~Jakobi, and G.~Rosone.
\newblock Large-scale compression of genomic sequence databases with the
  {Burrows-Wheeler} transform.
\newblock {\em Bioinformatics}, 28(11):1415--1419, 2012.

\bibitem{covid-data-portal}
The {COVID}-19 {D}ata {P}ortal.
\newblock Available at \url{https://www.covid19dataportal.org/}.
\newblock Accessed 17-05-2021.

\bibitem{diaz2021efficient}
Diego D{\'{\i}}az{-}Dom{\'{\i}}nguez and Gonzalo Navarro.
\newblock Efficient construction of the extended {BWT} from grammar-compressed
  {DNA} sequencing reads.
\newblock {\em CoRR}, abs/2102.03961, 2021.

\bibitem{egidi2019external}
L.~Egidi, F.~Louza, G.~Manzini, and G.P. Telles.
\newblock {External memory BWT and LCP computation for sequence collections
  with applications}.
\newblock {\em Algorithms Mol Biol}, 14(1):1--15, 2019.

\bibitem{BigRePair}
Travis Gagie, Tomohiro I, Giovanni Manzini, Gonzalo Navarro, Hiroshi Sakamoto,
  and Yoshimasa Takabatake.
\newblock Rpair: Rescaling repair with rsync.
\newblock In Nieves~R. Brisaboa and Simon~J. Puglisi, editors, {\em 26th
  International Symposium on String Processing and Information Retrieval
  ({SPIRE} 2019)}, volume 11811 of {\em Lecture Notes in Computer Science},
  pages 35--44. Springer, 2019.

\bibitem{GeRe93}
I.~M. Gessel and C.~Reutenauer.
\newblock Counting permutations with given cycle structure and descent set.
\newblock {\em J Combin Theory Ser A}, 64(2):189--215, 1993.

\bibitem{GIA07}
R.~Giancarlo, A.~Restivo, and M.~Sciortino.
\newblock From first principles to the {B}urrows and {W}heeler transform and
  beyond, via combinatorial optimization.
\newblock {\em Theor Comput Sci}, 387:236 -- 248, 2007.

\bibitem{meta}
V.~Guerrini, F.A. Louza, and G.~Rosone.
\newblock Metagenomic analysis through the extended {Burrows-Wheeler}
  transform.
\newblock {\em BMC Bioinfo}, 21(299), 2020.

\bibitem{GuerriniRosone_Alcob2019}
V.~Guerrini and G.~Rosone.
\newblock {Lightweight Metagenomic Classification via {eBWT}}.
\newblock In {\em Proc of WABI}, pages 112--124, 2019.

\bibitem{HonKLST12}
Wing{-}Kai Hon, Tsung{-}Han Ku, Chen{-}Hua Lu, Rahul Shah, and Sharma~V.
  Thankachan.
\newblock {Efficient Algorithm for Circular Burrows-Wheeler Transform}.
\newblock In Juha K{\"{a}}rkk{\"{a}}inen and Jens Stoye, editors, {\em
  Combinatorial Pattern Matching - 23rd Annual Symposium, {CPM} 2012, Helsinki,
  Finland, July 3-5, 2012. Proceedings}, volume 7354 of {\em Lecture Notes in
  Computer Science}, pages 257--268. Springer, 2012.

\bibitem{KMP77}
D.~Knuth, J.H. Morris, and V.~Pratt.
\newblock Fast pattern matching in strings.
\newblock {\em SIAM J Comput}, 6(2):323--350, 1977.

\bibitem{KA2005}
Pang Ko and Srinivas Aluru.
\newblock Space efficient linear time construction of suffix arrays.
\newblock {\em Journal of Discrete Algorithms}, 3(2):143--156, 2005.

\bibitem{KucherovTV13}
G.~Kucherov, L.~T\'{o}thm\'{e}r\'{e}sz, and S.~Vialette.
\newblock On the combinatorics of suffix arrays.
\newblock {\em Inf Process Lett}, 113(22-24):915--920, 2013.

\bibitem{recomb19}
A.~Kuhnle et~al.
\newblock Efficient construction of a complete index for pan-genomics read
  alignment.
\newblock In {\em Proc. of RECOMB}, pages 158--173, 2019.

\bibitem{rope}
H.~Li.
\newblock Fast construction of {FM}-index for long sequence reads.
\newblock {\em Bioinformatics}, 30(22):3274--3275, 2014.

\bibitem{louza-book}
F.~Louza, S.~Gog, and G.~P.~Telles.
\newblock {\em {Construction of Fundamental Data Structures for Strings}}.
\newblock Springer International Publishing, 2020.

\bibitem{louza2020gsufsort}
F.A. Louza, G.P. Telles, S.~Gog, N.~Prezza, and G.~Rosone.
\newblock gsufsort: constructing suffix arrays, {LCP} arrays and {BWTs} for
  string collections.
\newblock {\em Algorithms Mol Biol}, 15(1):1--5, 2020.

\bibitem{egsa}
Felipe~A. Louza, Guilherme~P. Telles, Steve Hoffmann, and Cristina~Dutra
  de~Aguiar~Ciferri.
\newblock Generalized enhanced suffix array construction in external memory.
\newblock {\em Algorithms Mol. Biol.}, 12(1):26:1--26:16, 2017.

\bibitem{mm1993}
U.~Manber and G.~W. Myers.
\newblock Suffix arrays: a new method for on-line string searches.
\newblock {\em SIAM J Comput}, 22(5):935--948, 1993.

\bibitem{MantaciRRS07}
S.~Mantaci, A.~Restivo, G.~Rosone, and M.~Sciortino.
\newblock {An extension of the Burrows-Wheeler Transform}.
\newblock {\em Theor Comput Sci}, 387(3):298--312, 2007.

\bibitem{MantaciRS03}
S.~Mantaci, A.~Restivo, and M.~Sciortino.
\newblock {Burrows-Wheeler} transform and {Sturmian} words.
\newblock {\em Inf Process Lett}, 86(5):241--246, 2003.

\bibitem{gonzalo-book}
G.~Navarro.
\newblock {\em Compact Data Structures: A Practical Approach}.
\newblock Cambridge University Press, 2016.

\bibitem{NongZC2011}
G.~{Nong}, S.~{Zhang}, and W.~H. {Chan}.
\newblock Two efficient algorithms for linear time suffix array construction.
\newblock {\em IEEE Trans Comput}, 60(10):1471--1484, 2011.

\bibitem{ohlebusch-book}
E.~Ohlebusch.
\newblock {\em {Bioinformatics Algorithms: Sequence Analysis, Genome
  Rearrangements, and Phylogenetic Reconstruction}}.
\newblock Oldenbusch Verlag, 2013.

\bibitem{EnCombWords}
D.~Perrin and A.~Restivo.
\newblock Enumerative combinatorics on words.
\newblock In {\em Handbook of Enumerative Combinatorics, ed.\ by Miklos Bona}.
  2015.

\bibitem{PrezzaPSR19}
N.~Prezza, N.~Pisanti, M.~Sciortino, and G.~Rosone.
\newblock {SNPs detection by eBWT positional clustering}.
\newblock {\em Algorithms Mol Biol}, 14(1):1--13, 2019.

\bibitem{PrezzaPSR20}
N.~Prezza, N.~Pisanti, M.~Sciortino, and G.~Rosone.
\newblock {Variable-order reference-free variant discovery with the
  Burrows-Wheeler Transform}.
\newblock {\em {BMC} Bioinform}, 21-S(8):260, 2020.

\bibitem{VGP}
A.~Rhie et~al.
\newblock Towards complete and error-free genome assemblies of all vertebrate
  species.
\newblock {\em Nature}, 592:737--0746, 2021.

\bibitem{Shiloach81}
Y.~Shiloach.
\newblock Fast canonization of circular strings.
\newblock {\em J. Algorithms}, 2(2):107--121, 1981.

\bibitem{stephens_big_2015}
Z.~D. Stephens et~al.
\newblock Big {Data}: {Astronomical} or {Genomical}?
\newblock {\em PLOS Biology}, 13(7):e1002195, 2015.

\bibitem{genometrakr}
E.L. Stevens et~al.
\newblock The public health impact of a publically available, environmental
  database of microbial genomes.
\newblock {\em Front Microbiol}, 8:808, 2017.

\bibitem{1000genomes}
{The 1000 Genomes Project Consortium}.
\newblock A global reference for human genetic variation.
\newblock {\em Nature}, 526:68--74, 2015.

\bibitem{100K}
C.~Turnbull et~al.
\newblock The 100,000 genomes project: bringing whole genome sequencing to the
  {NHS}.
\newblock {\em Br Med J}, 361, 2018.

\bibitem{YANG2010742}
L.~Yang, X.~Zhang, and T.~Wang.
\newblock {The Burrows–Wheeler similarity distribution between biological
  sequences based on Burrows–Wheeler transform}.
\newblock {\em J Theor Biol}, 262(4):742--749, 2010.

\end{thebibliography}

\newpage

\appendix

\section{\eBWT\ missing examples}\label{app:exampleeBWT}
Full conjugate table for Example~\ref{ex:ex1}: $\cal M = \{ \textrm{GTACAACG,CGGCACACACGT,C}\}$. 

\medskip

\begin{figure}[h!]
\begin{center}
\begin{tabular}{lrrl}
            &  & \textrm{GCA} & $\preceq_\omega$-sorted conjugates~   \\
            &1 & (5,1)  & $\textrm{AACGGTA}\textbf{C}$     \\
             &2 & (3,1)  & $\textrm{ACAACGG}\textbf{T}$     \\
            &3 & (5,2)  & $\textrm{ACACACGTCGG}\textbf{C}$    \\
             &4 & (7,2)  &  $\textrm{ACACGTCGGCA}\textbf{C}$  \\
             &5 & (6,1)  & $\textrm{ACGGTAC}\textbf{A}$      \\
            &6 & (9,2)  & $\textrm{ACGTCGGCACA}\textbf{C}$  \\
            &7 & (4,1)  & $\textrm{CAACGGT}\textbf{A}$   \\
             &8 & (4,2)  & $\textrm{CACACACGTCG}\textbf{G}$   \\
              &9 & (6,2)  & $\textrm{CACACGTCGGC}\textbf{A}$    \\
            &10 & (8,2)  & $\textrm{CACGTCGGCAC}\textbf{A}$   \\
$\rightarrow$ &11 & (1,3)  & $\textbf{C}$                  \\
$\rightarrow$ &12 & (1,2)  & $\textrm{CGGCACACACG}\textbf{T}$  \\
            &13 & (7,1)  & $\textrm{CGGTACA}\textbf{A}$      \\
            &14 & (10,2)  & $\textrm{CGTCGGCACAC}\textbf{A}$   \\
            &15 & (3,2)  & $\textrm{GCACACACGTC}\textbf{G}$    \\
            &16 & (2,2)  & $\textrm{GGCACACACGT}\textbf{C}$    \\
            &17 & (8,1)  & $\textrm{GGTACAA}\textbf{C}$       \\
$\rightarrow$ &18 & (1,1)  & $\textrm{GTACAAC}\textbf{G}$    \\
            &19 & (11,2)  & $\textrm{GTCGGCACACA}\textbf{C}$    \\
            &20 & (2,1)  & $\textrm{TACAACG}\textbf{G}$    \\
             &21 & (12,2)  & $\textrm{TCGGCACACAC}\textbf{G}$   
\end{tabular}
\end{center}
\end{figure}

\begin{center}
\eBWT(\{\textrm{GTACAACG,CGGCACACACGT,C}\}) = \textrm{CTCCACAGAACTAAGCCGCGG}
\end{center}

\end{document}